\tikzset{>=stealth}
\tikzset{every picture/.style={line width=1pt}}
\pgfplotsset{compat=1.14}
\definecolor{shadecolor}{rgb}{.9,.9,.9}
  \renewcommand*{\AC@hyperlink}[2]{%
    \begingroup
      \hypersetup{hidelinks}%
      \hyperlink{#1}{#2}%
    \endgroup
  }%
\newtheorem{theorem}{Theorem}
\newtheorem{definition}{Definition}
\newtheorem{example}{Example}
\newtheorem{remark}{Remark}
\newcommand{\vv}[1]{{\boldsymbol{\mathbf{#1}}}}
\newcommand{\iv}[1]{{\boldsymbol{#1}}}
\newcommand{\ip}[2]{\left\langle#1,#2\right\rangle}
\renewcommand{\imath}{\textrm{j}}
\newcommand{\e}{\textrm{e}}
\newcommand{\samp}{S}
\newcommand{\fourier}{\mathcal{F}}
\newcommand{\mult}{M}
\newcommand{\conv}{C}
\newcommand{\COV}{\Phi}
\newcommand{\synth}{T}
\newcommand{\xray}{X}
\DeclareMathOperator{\rect}{rect}
\DeclareMathOperator{\sinc}{sinc}
\DeclareMathOperator{\prox}{prox}
\DeclareMathOperator*{\argmin}{arg\,min}
\DeclareMathOperator*{\argmax}{arg\,max}
\title{Biomedical Image Reconstruction: From the Foundations to Deep Neural Networks}
\author{
Michael T. McCann \\
\and
Michael Unser
}
\date{École polytechnique fédérale de Lausanne}
\begin{document}

\maketitle

\chapter*{Abstract}
  This tutorial covers biomedical image reconstruction,
  from the foundational concepts of system modeling and direct reconstruction
  to modern sparsity and learning-based approaches.
  
  Imaging is a critical tool in biological research and medicine,
  and most imaging systems necessarily use an image-reconstruction algorithm
  to create an image;
  the design of these algorithms has been a topic of research since at least the 1960's.  
  In the last few years,
  machine learning-based approaches have shown impressive performance on image reconstruction problems,
  triggering a wave of enthusiasm and creativity around the paradigm of learning.
  Our goal is to unify this body of research,
  identifying common principles and reusable building blocks across
  decades and among diverse imaging modalities.

  We first describe system modeling, emphasizing
  how a few building blocks can be used to describe a broad range of imaging modalities.
  We then discuss reconstruction algorithms, grouping them into three broad generations.
  The first are the classical direct methods, including Tikhonov regularization;
  the second are the variational methods based on sparsity and the theory of compressive sensing;
  and the third are the learning-based (also called data-driven) methods,
  especially those using deep convolutional neural networks.
  There are strong links between these generations:
  classical (first-generation) methods appear as modules inside the latter two,
  and the former two are used to inspire new designs for learning-based (third-generation) methods.
  As a result,
  a solid understanding of all three generations is necessary for the design of state-of-the-art algorithms.

\chapter*{List of Abbreviations}
{\protect \addcontentsline{toc}{chapter}{List of Abbreviations}}%
\begin{acronym}
  \acro{ADMM}[ADMM]{alternating direction method of multipliers}
  \acro{CCD}[CCD]{charge-coupled device}
  \acro{CG}[CG]{conjugate gradient}
  \acro{CNN}[CNN]{convolutional neural network}
  \acro{CT}[CT]{computed tomography}
  \acro{DCT}[DCT]{discrete cosine transform}
  \acro{ET}[ET]{electron tomography}
  \acro{FBP}[FBP]{filtered back projection}
  \acro{FFT}[FFT]{fast Fourier transform}
  \acro{GPU}[GPU]{graphics processing unit}
  \acro{iid}[i.i.d.]{independent and identically distributed}
  \acro{ISTA}[ISTA]{iterative shrinkage and thresholding}
  \acro{MAP}[MAP]{maximum a posteriori}
  \acro{MMSE}[MMSE]{minimum mean square error}
  \acro{MRI}[MRI]{magnetic resonance imaging}
  \acro{MSE}[MSE]{mean squared error}
  \acro{PDF}[PDF]{probability distribution function}
  \acro{PET}[PET]{positron emission tomography}
  \acro{PSF}[PSF]{point spread function}
  \acro{RKHS}[RKHS]{reproducing kernel Hilbert spaces}
  \acro{SGD}[SGD]{stochastic gradient descent}
  \acro{SIM}[SIM]{structured-illumination microscopy}
  \acro{SNR}[SNR]{signal-to-noise ratio}
  \acro{SPECT}[SPECT]{single-photon emission computed tomography}
  \acro{SSIM}[SSIM]{structural similarity index}
  \acro{TCIA}[TCIA]{The Cancer Imaging Archive}
  \acro{TV}[TV]{total variation}
  \acro{USC-SIPI}[USC-SIPI]{University of Southern California Signal and Image Processing Institute}
\end{acronym}

\chapter{Introduction}
Biomedical imaging is a vast and diverse field:
there are a plethora of imaging devices using, e.g.,
light, X-rays, sound waves, magnetic fields, electrons, or protons,
to measure structures ranging from nano- to macroscale.
In many cases, computer software is needed to
turn the signals collected by the hardware
into a meaningful image.
These computer algorithms are
similarly diverse and numerous.
For example, Google Scholar lists around 1,600 papers
with the words ``MRI reconstruction'' in the title.

In this tutorial, we aim to present
a wide swath of biomedical image reconstruction algorithms
under a single framework---%
using a unified notation across domains,
modeling many modalities with a few basic operators,
and emphasizing commonalities among reconstruction algorithms.
Our goal is not to review the totality of the biomedical image reconstruction literature,
nor even to provide a comprehensive tutorial
(though we have included references throughout
that we hope will be useful pointers for a reader wishing to learn more).
Instead, we have focused on creating a brief and coherent narrative
that traverses some six decades of research.
The result is naturally a little stylized,
and reflects the biases of the authors.
For another recent perspective on these topics, see \cite{ravishankar_image_2020}.

Focusing on the similarities across modalities
is not only a didactic choice,
it is also a practical one.
While developing high-quality reconstruction algorithms
requires a deep understanding of both the physics of the imaging system
and the biomedical questions at hand,
we have found great benefit in our own work from adopting a unified point of view.
We model imaging modalities as combinations of a small set of building blocks,
which allows us to see connections between modalities
and quickly port ideas and computer code from one to the next.
Reconstruction algorithms can treat the imaging model as a black box,
meaning that one algorithm can work for many modalities.
Such modular solutions may not be completely optimal,
but they are often sufficient for a given application.
When they are not, they still serve as a valuable starting point.

The structure of the tutorial is as follows.
We begin by introducing the concept of forward models
and a set of building blocks for creating them in Section~\ref{chap:forward}.
In the next three sections
we introduce reconstruction algorithms in a roughly chronological way,
beginning with direct reconstruction and $\ell_2$ regularization (Section~\ref{chap:classical}),
followed by modern, sparsity-based techniques (Section~\ref{chap:modern}),
and ending with learning-based methods (Section~\ref{chap:learning}).
We compare these approaches and give our outlook on
the next steps for the field in Section~\ref{chap:conclusion}.

\chapter{Forward Models}
\label{chap:forward}
The goal of any imaging system is recover an image,
which is a map of a meaningful physical quantity
that varies over space.
In biomedical imaging, this might be
the concentration of fluorophores in a cell
or
the density of the tissues in a person's body.
Mathematically, we describe an image as a function, $f: \mathbb{R}^d \to \mathbb{R}$, %
that maps a point in space $\iv{x} \in \mathbb{R}^d$ to a real number, $y = f(\iv{x}) \in \mathbb{R}$.
We leave the dimension, $d$, of $\iv{x}$ unspecified;
it is usually between one and four: up to three spatial dimensions possibly plus time.

\begin{figure}[!htbp]
  \centering
  \includegraphics[width=\textwidth]{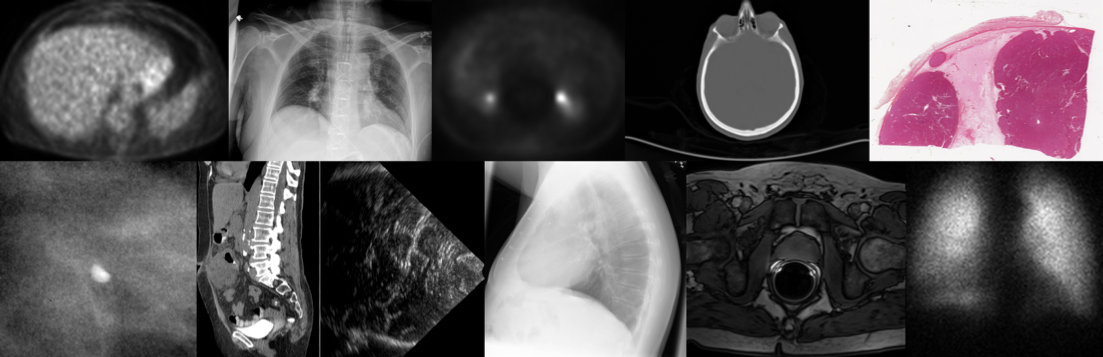}
  \caption{Examples of biomedical images.
    Top row:
    \acf{PET},
    computed radiography,
    X-ray \acf{CT} plus \ac{PET},
    X-ray \ac{CT},
    and microscopy.
    Bottom row:
    mammography,
    and contrast-enhanced X-ray \ac{CT},
    ultrasound,
    digital radiography,
    \acf{MRI},
    and \ac{PET} plus \acf{SPECT}.
    (All images from the \acf{TCIA}~\cite{clark_cancer_2013}
    except the microscopy~\cite{gutman_digital_2017}
    and ultrasound \cite{mercier_online_2012}.)}
  \label{fig:biomed-images}
\end{figure}

In any practical setting, the image of interest is not directly observed,
but must be reconstructed from measurements.
We view $f$ through a physical imaging system, $\mathcal{H}$,
that maps $f$ to a vector of measurements, 
\begin{equation}
\vv{g} = \mathcal{H}\{f\} \in \mathbb{R}^M.
\end{equation}
The measurement process encompasses all of the physics of the imaging system,
including the sources of noise.
It is nondeterministic, nonlinear, and impossible to perfectly characterize or simulate.
The measurements, on the other hand, are necessarily finite, quantized, and discrete
because they are stored digitally.%
\footnote{Throughout the text, we use bold font for these finite-length vectors and matrices
  to emphasize that they are concrete objects that we can store and use in computations.}
Biomedical imaging includes a broad diversity
of imaging modalities, e.g., systems that rely on
electromagnetic radiation (visible light, X-rays, gamma rays, etc.),
electrons, magnetic fields, or sound waves;
see Figure~\ref{fig:biomed-images} for examples.

Most%
\footnote{but not all, as we will discuss in Section~\ref{chap:learning}}
methods to recover an image from its measurements
rely on modeling the imaging system, $\mathcal{H}$,
with a deterministic operator $H$ and a random noise term, $n \in \mathbb{R}^M$,
so that
\begin{equation}
\mathcal{H}\{f\} \approx H\{f\} + n.
\end{equation}
Let us first focus on the deterministic part, $ H\{f\}$.
The key challenge is to develop an operator $H$ that is simultaneously mathematical convenient,
computationally tractable,
and physically accurate is a key challenge;
fortunately,
it turns out that many biomedical imaging modalities can be approximated
using combinations of just a few basic linear operators.
In this section, we will introduce these operators along with examples.

\section{Vector Spaces}

Before we can describe the action of an operator, $H$,
we need to specify its domain and range,
i.e., where it maps from and to.
Collectively, these are known as vector spaces,
which are the central topic of linear algebra.
For our purposes, it suffices to put the mathematical details aside
and define two special vector spaces that we will use throughout.

The first is $\mathbb{C}^N$, which is the space of finite-length lists of $N$ complex numbers.%
\footnote{In an engineering context, the term $\emph{vector}$
  usually refers to an element of $\mathbb{C}^N$ or $\mathbb{R}^N$;
  in the following, we have always tried to specify vector spaces explicitly to avoid ambiguity.}
Generally, we use elements of this space to represent measurements
or coefficients.
The use of complex-valued vectors is mathematically convenient,
but can be a source of frustration during implementation---%
on one hand, some programming languages do not natively handle complex values,
one the other, some papers specify algorithms only for real values.
When trouble arises, it is useful to remember that
a vector of $N$ complex numbers can be represented by a vector of $2N$ real numbers,
provided all arithmetic performed on the vector respects the convention used.

The second is the space of square-integrable functions $L_2(\mathbb{R}^d) = \{f : \int_{\mathbb{R}^d} |f(\iv{x})|^2 d\iv{x} < \infty \}$.
It is also called the space of finite-energy functions
and is closely related to the concept of energy in physics.
We use elements of this space to represent continuous objects,
including the image we are trying to reconstruct.
While the choice of $L_2(\mathbb{R}^d)$ cannot always be physically justified,
it is mathematically convenient and hardly restrictive on the resulting reconstructions.
Other restrictions,
(e.g., on the size of the non-zero support of a function,
its bandwidth,
or its smoothness)
may also be important depending on the specifics of the imaging system;
we will mention these as they arise.
We can generalize the concept of $L_2(\mathbb{R}^d)$ to $L_p(\mathbb{R}^d)$
in a straightforward way: $L_p(\mathbb{R}^d) = \{f : \int_{\mathbb{R}^d} |f(\iv{x})|^p d\iv{x} < \infty \}$,
with  $L_1(\mathbb{R}^d)$ being the most common.

\section{Linear Operators}
Our focus here is on operators that are linear.
\begin{definition}
  Let $\mathcal{X}$ and $\mathcal{Y}$ be two vector spaces.
  An operator, $H : \mathcal{X} \rightarrow \mathcal{Y}$,
  is a \emph{linear operator from $\mathcal{X}$ to $\mathcal{Y}$}
  when it satisfies
  \begin{equation}
    H\{af + bg \} = aH\{f\} + bH\{g\},
  \end{equation}
  for all scalars $a$ and $b \in \mathbb{R}$ and images $f$ and $g \in \mathcal{X}$.
\end{definition}
Linear operators provide a good approximation for many real imaging systems.
At the same time, they are mathematically convenient,
e.g., they allow the reconstruction problem to be solved efficiently
using convex optimization (Section~\ref{sec:iterative-reconstruction}).  
Another implication of selecting $H$ to be linear is that,
when $\mathcal{X}$ is $L_2(\mathbb{R}^d)$ and $\mathcal{Y}$ is $\mathbb{C}^M$,
each measurement can be written as the inner product between the image and a detector function
\begin{equation}
  \label{eq:detector}
  [H\{f\}]_m = [\vv{g}]_m = \ip{\eta_m}{f} = \int_{\mathbb{R}^d} \eta_m(\iv{x}) f(\iv{x}) d\iv{x},
\end{equation} 
where $[\vv{g}]_m$ designates the $m$th element of a vector $\vv{g}$.%
\footnote{This result holds for certain vector spaces other than $L_2(\mathbb{R}^d)$ as well;
we give a special case here for the sake of simplicity.}
The specific form of each $\eta_m$ is determined by the system model, $H$.
This result follows from a well-known theorem in analysis
called the Riesz representation theorem~\cite{rudin_real_1986,brezis_functional_2010}.

\section{Building Blocks}
The composition of linear operators is also linear,
which means that it is useful to build up a toolbox of
simple, well-understood operators from which to build more complicated ones.
We begin with sampling.
\begin{definition}
  Let $\mathcal{X}$ be a vector space of continuous functions.
  The \emph{sampling operator},
  $\samp_X : \mathcal{X} \rightarrow  \mathbb{R}^M$,
  with $ X = {\{\iv{x}_m\}_{m=1}^M}$,
returns a vector of the values of $f$ at $M$ known locations,
\begin{equation}
  [\samp_X\{f\}]_m = f(\iv{x}_m).
\end{equation}
See Figure~\ref{fig:sampling} for an example.
\end{definition}

\begin{figure}[htbp]
  \definecolor{flagBlue}{HTML}{003F87}
  \definecolor{flagYellow}{HTML}{FCD856}
  \definecolor{flagGreen}{HTML}{007A3D}
  \centering

  \newsavebox\mybox
  \begin{lrbox}{\mybox}
    \begin{tikzpicture}
      \draw [->] (0,0) -- (.39\textwidth,0) node [below] {$x_1$};
      \draw [->] (0,0) -- (0,.19\textwidth) node [left] {$x_2$};
      \fill[gray] (.5,1.5) circle (3pt) node[black, below] {$\iv{x}_1$};
      \fill [gray] (2.5,2) circle (3pt) node[black, below] {$\iv{x}_2$};
      \fill [gray] (4.3,.6) circle (3pt) node[black, below] {$\iv{x}_3$};
    \end{tikzpicture}
  \end{lrbox}

  \begin{tikzpicture}
    
    \node [inner sep=0, label=below:$f$] (input)
    {\includegraphics[width=.38\textwidth]{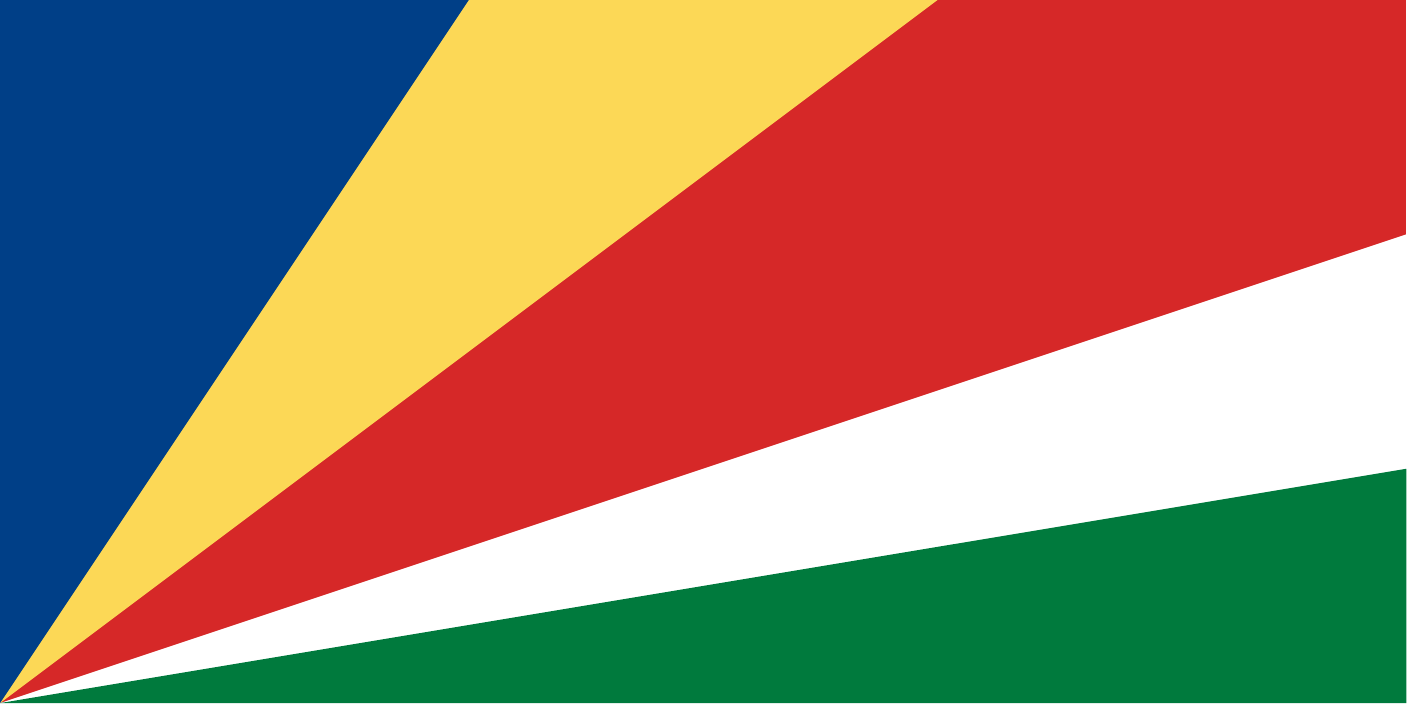}};

    \draw [->] (input.south west) -- (input.south east) -- ++(.25,0) node [below] {$x_1$};
    \draw [->] (input.south west) -- (input.north west) -- ++(0,.25) node [left] {$x_2$};

    \node[rectangle, fill=gray!20, right = 1.5 of input] (S) {$\samp_X$};
    
    \node[below = 1.5 of S, inner sep=0] (x) {\usebox\mybox};
    \node[below = -.5 of x] {$\{\iv{x}_m\}$};

    \node (output) [right = 1.5 of S, inner sep=0] {$
      \begin{bmatrix}
        \color{flagBlue} \blacksquare \\
        \color{flagYellow} \blacksquare \\
        \color{flagGreen} \blacksquare 
      \end{bmatrix}$
    };
    \node[below = 0 of output] {$\vv{g}$};
    
    \draw[->,line width=1mm] (input) -- (S);
    \draw[->,line width=1mm] (S) -- (output);
    \draw[->,line width=1mm] (x) -- (S);
  \end{tikzpicture}

  \caption{Sampling the flag of Seychelles, $f$,
    at the three locations specified by $\{\iv{x}_m\}$
    results in the vector of measurements $\vv{g}$.}
  \label{fig:sampling}
\end{figure}
For the sampling operator to be well-defined on $\mathcal{X}$,
the elements of the space need to be continuous functions,
This is not the case for $ L_2(\mathbb{R}^d)$,
however, many subspaces of $ L_2(\mathbb{R}^d)$ work,
e.g., the space of bandlimited functions.
A Hilbert space over which sampling can be defined is called 
\iac{RKHS},
see \cite{manton_primer_2015} for more details.

Sampling is not a particularly useful imaging model on its own;
it is useful because it allows us to express imaging as a sampled version of a well-understood continuous operation,
such as the Fourier transform.
\begin{definition}
The \emph{Fourier transform}, $\fourier :  L_2(\mathbb{R}^d) \cap L_1(\mathbb{R}^d) \rightarrow  L_2(\mathbb{R}^d)$, 
expresses $f$ in terms of its frequency components,
\begin{equation}
  \label{eq:fourier}
  \fourier\{f\}(\iv{\omega}) = \hat{f}(\iv{\omega}) = \int_{\mathbb{R}^d} f(\iv{x}) \e^{-\imath \ip{\iv{\omega}}{\iv{x}}} d\iv{x}.
\end{equation}
Here, we have introduced the common shorthand that
$\hat{f}$ stands  for the Fourier transform of $f$.
See Figure~\ref{fig:FT} for an example.
\end{definition}
The restriction to $f \in  L_2(\mathbb{R}^d) \cap L_1(\mathbb{R}^d)$ has two advantages:
First, it ensures that the integral \eqref{eq:fourier} is well defined.
Second, it guarantees that $\hat{f}$ is continuous,
which then makes sampling feasible.

\begin{figure}[htbp]

  \centering

  \begin{tikzpicture}
     \node [inner sep=0, label=below:$f$] (input)
    {\includegraphics[width=.38\textwidth]{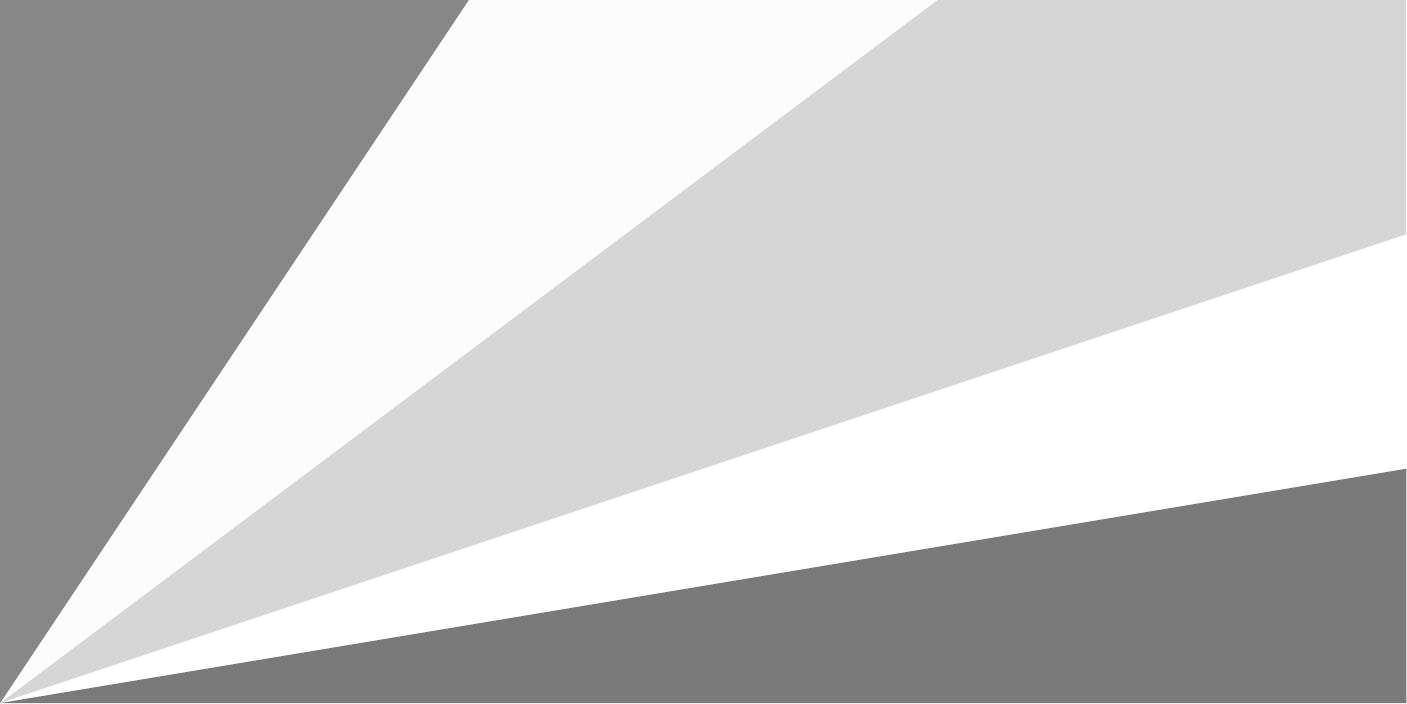}};
    \draw [->] (input.south west) -- (input.south east) -- ++(.25,0) node [below] {$x_1$};
    \draw [->] (input.south west) -- (input.north west) -- ++(0,.25) node [left] {$x_2$};
    
    \node[rectangle, fill=gray!20, right = 1 of input] (F) {$\fourier$};

    \node [inner sep=5, label=below:$\log |\hat{f}|$, right = 1 of F] (output)
    {\includegraphics[width=.25\textwidth]{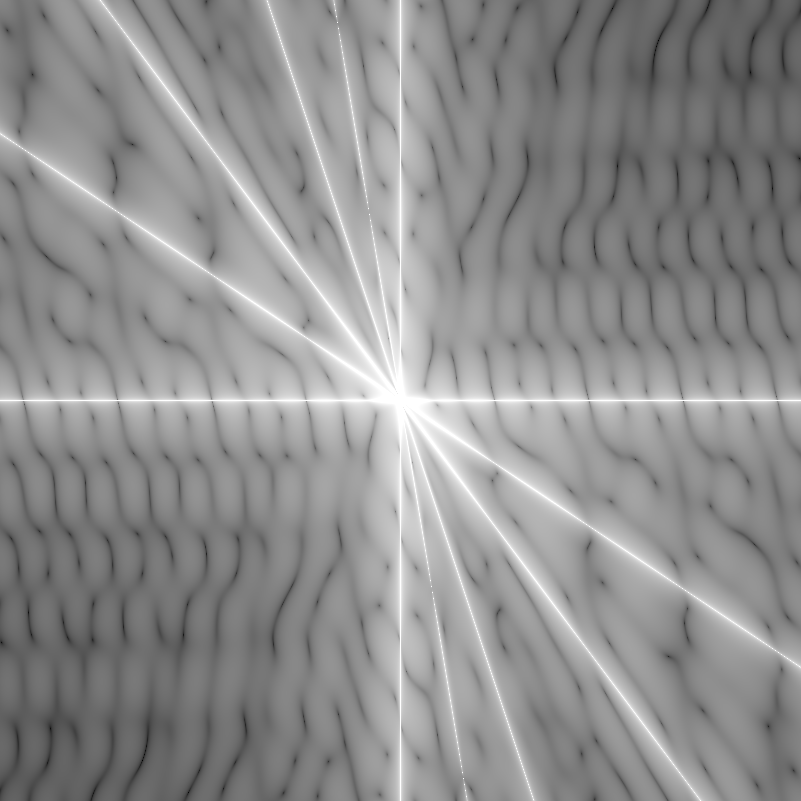}};

    \draw [<->] (output.west) -- (output.east) node [right] {$\omega_1$};
    \draw [<->] (output.south) -- (output.north) node [above] {$\omega_2$};

    \draw[->,line width=1mm] (input) -- (F);
    \draw[->,line width=1mm] (F) -- (output);
  \end{tikzpicture}

  \caption{Applying the Fourier transform to a grayscale rendition of the flag of Seychelles, $f$.
    Plotted here (in $\log$-scale) is the magnitude of the result,
    which is a complex-valued function of the frequency vector $\iv{\omega}$.
    High values (lighter pixels) in the Fourier transform correspond to
    the directions of greatest change;
    this explains the white lines in $\hat{f}$ perpendicular to the edges in the flag.
    The Fourier transform has been computed analytically thanks to \cite{lee_fourier_1983}.
}
  \label{fig:FT}
\end{figure}

\begin{example}[Magnetic resonance imaging]
The Fourier transform and sampling operators are sufficient to model \ac{MRI};
this is known as the $k$-space formalism~\cite{webb_introduction_2002}.
In proton density-weighted \ac{MRI},
the physical quantity of interest is the proton density.
The measured signal is given by
\begin{equation}
\vv{g} = (\samp_X \circ \fourier) \{f\},
\end{equation}
where $\circ$ indicates operator composition,i.e., $(A \circ B)\{f\} = A\{ B \{ f\} \}$.
The locations of the Fourier samples depend on the hardware setup.
Typical choices include Cartesian sampling, random sampling, and spiral sampling.
\end{example}

\begin{definition}
The \emph{inverse Fourier transform}, $\fourier^{-1} :  L_2(\mathbb{R}^d) \rightarrow  L_2(\mathbb{R}^d)$, 
recovers $f$ from its Fourier transform,
\begin{equation}
\label{eq:inverse_fourier}
  f(\iv{x}) = \fourier^{-1} \{\hat{f}\}(\iv{x}) = \frac{1}{(2\pi)^d}\int_{\mathbb{R}^d} \hat{f}(\iv{\omega}) \e^{\imath \ip{\iv{\omega}}{\iv{x}}} d\iv{\omega}.
\end{equation}
\end{definition}

The fact that the Fourier transform is invertible suggests a method for MRI reconstruction: 
approximate the integral~\eqref{eq:inverse_fourier} with a sum over the known $[\vv{g}]_m = \hat{f}(\iv{\omega}_m)$.
As the measurement density increases, this method becomes more and more accurate.
We will explore this approach in more detail in Section~\ref{sec:direct}. 

\begin{definition}
The \emph{multiplication operator}, $\mult_h :  L_2(\mathbb{R}^d) \rightarrow L_2(\mathbb{R}^d) $,
multiplies $f$ pointwise by another image,
\begin{equation}
  \mult_h \{f \}(\iv{x}) = f(\iv{x}) h(\iv{x}),
\end{equation}
where $h$ must be bounded: for all $\iv{x}$, $|h(\iv{x})| < \infty$.
\end{definition}

One example of multiplication is windowing (Figure~\ref{fig:windowing}),
where $h$ is a function taking values between zero and one, such as a $\rect$.
\begin{equation}
\label{eq:rect}
  \rect(\iv{x}) =
  \begin{cases}
    1 & \iv{x} \in [-\frac{1}{2}, \frac{1}{2}]^d; \\
    0 & \text{otherwise}.
  \end{cases}
\end{equation}
While it is mathematically convenient
to work with infinite signals
(e.g., note the integral over all space in the definition of the inverse Fourier transform),
real imaging systems have a limited spatial extent;
windowing with $\rect$ conveniently models this limit.

\begin{figure}[htbp]
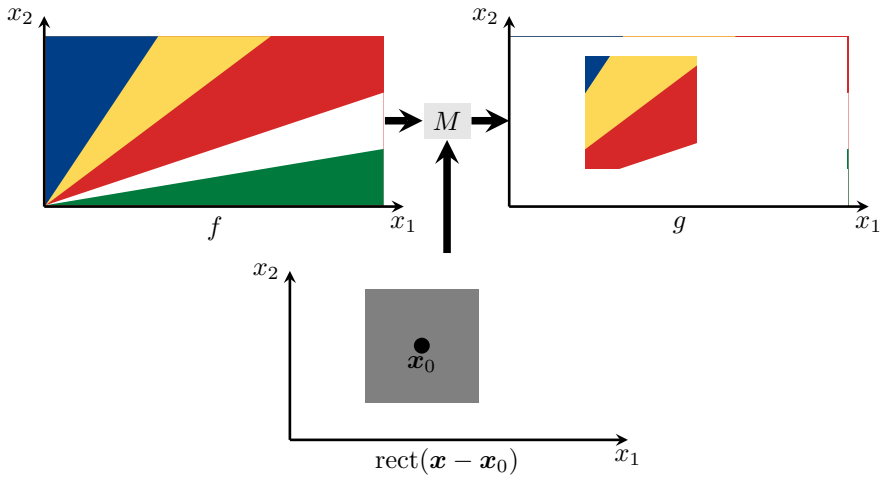

  \centering
  \newsavebox\myWindow
  \begin{lrbox}{\myWindow}
    \begin{tikzpicture}
      \draw [->] (0,0) -- (.38\textwidth,0) node [below] {$x_1$};
      \draw [->] (0,0) -- (0,.19\textwidth) node [left] {$x_2$};
      \fill[gray] (1, .5) rectangle (2.5, 2);
      \fill[black] (1.75,1.25) circle (3pt) node[below] {$\iv{x}_0$};
    \end{tikzpicture}
  \end{lrbox}

  \begin{tikzpicture}
    
    \node [inner sep=0, label=below:$f$] (input)
    {\includegraphics[width=.38\textwidth]{Figures/flag.pdf}};

    \draw [->] (input.south west) -- (input.south east) -- ++(.25,0) node [below] {$x_1$};
    \draw [->] (input.south west) -- (input.north west) -- ++(0,.25) node [left] {$x_2$};

    \node[rectangle, fill=gray!20, right = .5 of input] (S) {$\mult_h$};
    
    \node[below = 1.5 of S, inner sep=0] (x) {\usebox\myWindow};
    \node[below = -.5 of x] {$h(\iv{x}) = \rect (\iv{x} - \iv{x}_0)$};

     \node [right = .5 of S, inner sep=0, label=below:$g$] (output)
     {\includegraphics[width=.38\textwidth]{Figures/flag.pdf}};

     \fill[white] (output.south west) rectangle  ($ (output.south west) + (1,.19\textwidth) $);
     \fill[white] (output.south west) rectangle  ($ (output.south west) + (.38\textwidth,.5) $);
     \fill[white] ($(output.south west) + (1, 2) $) rectangle  ($ (output.south west) + (.38\textwidth,.19\textwidth) $);
     \fill[white] ($(output.south west) + (2.5, .5) $) rectangle  ($ (output.south west) + (.38\textwidth,.19\textwidth) $);
     
     \draw [->] (output.south west) -- (output.south east) -- ++(.25,0) node [below] {$x_1$};
     \draw [->] (output.south west) -- (output.north west) -- ++(0,.25) node [left] {$x_2$};

    \draw[->,line width=1mm] (input) -- (S);
    \draw[->,line width=1mm] (S) -- (output);
    \draw[->,line width=1mm] (x) -- (S);
  \end{tikzpicture}

  \caption{Windowing the flag of Seychelles, $f$,
    by multiplication with a shifted $\rect$.
    We can imagine this operation as modeling a limited field of view,
    like looking at the flag through a microscope.}
  \label{fig:windowing}
\end{figure}

Another example of multiplication  is modulation,
where $h$ is a complex exponential,
\begin{equation}
  h(\iv{x}) = \e^{\imath \ip{\iv{\omega}_0}{\iv{x}}}.
\end{equation}
The effect of modulation is easier to see in the frequency domain,
\begin{equation}
  (\fourier \circ \mult_{ \e^{\imath \ip{\iv{\omega}_0}{\iv{x}}}} ) \{ f \} (\iv{\omega}) =  \fourier\{f\}(\iv{\omega} - \iv{\omega}_0).
\end{equation}
Thus, multiplication by a complex exponential results in a shift in the frequency content of the image.

\begin{example}[MRI with coil sensitivity]
  A more accurate model of MRI accounts for spatial variations in the coil sensitivity,
  which can be modeled by a of $f$ prior to Fourier sampling,
\begin{equation}
  \vv{g} = (\samp_X \circ \fourier \circ \mult_h) \{f\},
\end{equation}
where $h$ is measured as part of calibration.
For a detailed treatment of the MRI forward model, see \cite{fessler_model_2010}.
\end{example}

\begin{definition}
The \emph{convolution-by-$h$ operator}, $\conv_h : L_2(\mathbb{R}^d) \to L_2(\mathbb{R}^d) $,
implements the linear, shift-invariant system with impulse response $h$,
\begin{equation}
  \conv_h \{f\} (\iv{x}) = \int_{\mathbb{R}^d} h(\iv{x} - \iv{x}') f(\iv{x}') d\iv{x}'.
\end{equation}
The impulse response, $h$, must be chosen so that the integral always exists;
e.g., it is sufficient for $h$ to be absolutely integrable
($ \int_{\mathbb{R}^d} |h(\iv{x})| d\iv{x} < \infty$). %
Convolution can be equivalently defined in the frequency domain,
\begin{equation}
  \conv_h \{f \}(\iv{x}) = (\fourier^{-1} \circ \mult_{\hat{h}} \circ \fourier) \{ f \} (\iv{x}),
\end{equation}
where $\hat{h}$ is the frequency response of the system, $\hat{h} = \fourier \{h\}$.
\end{definition}
Although  convolution is a composition of Fourier transforms and a multiplication,
it deserves its own symbol because of its ubiquitous role in imaging.
What makes it so useful is that many measurement systems are essentially shift-invariant,
because the underlying physics work the same everywhere and the sensors are placed on a regular grid,
e.g., pixels on a \ac{CCD} camera.
We say \emph{essentially} because a shift-invariant model
implicitly assumes that our measurements extend infinitely in all directions---%
while this is not the case, it can usually be handled properly with the help of boundary conditions.
Convolution is also important because discrete convolutions
can be implemented using the \ac{FFT},
which can result in fast reconstruction algorithms.

\begin{example}[Brightfield microscopy]
\label{ex:microscopy}
In brightfield microscopy,
what appears at the eyepiece can be modeled as
a blurred version of what is on the specimen plane,
\begin{equation}
  g = \conv_h \{ f \}.
\end{equation}
In two dimensions, the blurring function is called the Airy pattern,
\begin{equation}
  h(\iv{x}) =  \left(
    \frac{
      2 J_1 ( r )
    }{
      r
    } 
  \right)^2,
  \quad
  r = \frac{
    2 \pi a \|\iv{x}\|_2 }{
    \lambda R },
\end{equation} 
where $J$ is the Bessel function of the first kind, $a$ is the radius of the aperture,
$\lambda$ is the wavelength of the illumination,
and $R$ is the focal length.
The width of the central lobe of this function
plays an important role in the achievable resolution%
---the minimum distance that two point sources must be separated
 for them to be distinguished.
For visible light, this limit is around 250 nm.

For another perspective of the resolution limit,
we note that the Fourier transform of the Airy pattern
is a scaled version of the indicator function of a circle,
i.e.,
\begin{equation}
  \hat{h}(\iv{w}) =
  \begin{cases}
    C, &  \| \iv{\omega} \|_2 \le w_\text{max};\\
    0, & \text{otherwise}.
  \end{cases}
\end{equation}
Functions for which the Fourier transform
is zero outside of some radius, $w_\text{max}$, are called low-pass functions.
Writing the measurement process in the Fourier domain we have
\begin{align}
  \fourier g &= ( \mult_{\hat{h}} \circ \fourier) \{ f \} ,
\end{align}
and therefore the high-frequency content of $f$ has no effect on $g$
because everything outside of $w_\text{max}$
is multiplied by zero during the measurement.
For a textbook treatment of the pattern disk,
see, e.g., \cite{born_principles_2002};
for excellent introductory information on microscopy,
see~\cite{davidson_microscopyu_2001}.

\end{example}

\begin{example}[Structured-illumination microscopy]
  
The idea behind \ac{SIM} is to use a special illumination pattern to modulate the image $f$,
rearranging its frequency content so that its high frequencies are no longer destroyed by the measurement process.
The continuous forward model is
\begin{equation}
  g = (\conv_h \circ \mult_{p}) \{ f \},
\end{equation}
where $p$ is a complex exponential, $p(\iv{x}) = \e^{\imath \ip{\iv{\omega}_0}{\iv{x}}}$.
Viewed in the Fourier domain,
\begin{equation}
  \fourier \{g\} (\iv{\omega}) = ( \mult_{\hat{h}} \circ \fourier \circ \mult_{p}) \{ f \} (\iv{\omega}) \\
  = \hat{h}(\iv{\omega}) \fourier \{ f \} (\iv{\omega} - \iv{\omega}_0).
\end{equation}
To recover both the high-frequency and low-frequency content of $f$,
the imaging process can be repeated with different illumination patterns,
resulting in different values for the frequency shift, $\iv{\omega}_0$.
For more information on \ac{SIM}, see \cite{diaspro_super_2016}
\end{example}

\begin{definition}
  A \emph{change of variables operator}, $\COV_\varphi : L_2(\mathcal{X}) \rightarrow L_2(\mathcal{Y})$,
  is a linear operator such that $\COV_\varphi f = f(\varphi(\cdot))$,
  where $\varphi:\mathcal{Y} \rightarrow \mathcal{X}$ is a diffeomorphism
(a smooth function with a smooth inverse).
\end{definition}
We have already seen one change of variables: the frequency shift caused by multiplication by a complex exponential.
Using a change of variables operator, we can express a shift as
\begin{equation}
  f(\iv{x} - \iv{x}_0) = \COV_{\varphi_{\iv{x}_0}} \{f\}(\iv{x}), \quad \text{with} \quad \varphi_{\iv{x}_0}(\iv{x}) = \iv{x} - \iv{x}_0.
\end{equation}
See Figure~\ref{fig:COV} for another example.

\begin{figure}[htbp]
  \definecolor{flagBlue}{HTML}{003F87}
  \definecolor{flagYellow}{HTML}{FCD856}
  \definecolor{flagRed}{HTML}{D62828}
  \definecolor{flagGreen}{HTML}{007A3D}
  \centering
  
 \newsavebox\myPolar
  \begin{lrbox}{\myPolar}
    \begin{tikzpicture}
      \begin{axis}
        [width=.5\textwidth,
        xtick={0, 1.57079},
        xticklabels={$0$, $\frac{\pi}{2}$},
        xlabel near ticks,
        xlabel=$\theta$,
        xmax = 2,
        ymin = 0,
        ymajorticks=false,
        ylabel=$r$,
        axis lines = middle,
        every axis y label/.style={
          at={(ticklabel* cs:1)},
          anchor=east,},
        hide obscured x ticks=false]
        
        \addplot[domain=0:0.1651,samples=50,smooth,fill=flagGreen,draw=none] {2 / cos(deg(x))}\closedcycle;
        \addplot[domain=0.1651:.321750,samples=50,smooth,fill=white,draw=none] {2 / cos(deg(x))}\closedcycle;
        \addplot[domain=0.321750:0.463647,samples=50,smooth,fill=flagRed, draw=none] {2 / cos(deg(x))}\closedcycle;
        \addplot[domain=0.463647:0.6435,samples=50,smooth,fill=flagRed, draw=none] {1 / sin(deg(x))}\closedcycle;
        \addplot[domain=0.6435:0.98279,samples=50,smooth,fill=flagYellow,draw=none] {1 / sin(deg(x))}\closedcycle;
        \addplot[name path=blue, domain=0.98279:pi/2,samples=50,smooth,fill=flagBlue,draw=none] {1 / sin(deg(x))} \closedcycle;
       
     \end{axis}
    \end{tikzpicture}
  \end{lrbox}
  
  \begin{tikzpicture}
    
    \node [inner sep=0, label=below:$f$] (input)
    {\includegraphics[width=.38\textwidth]{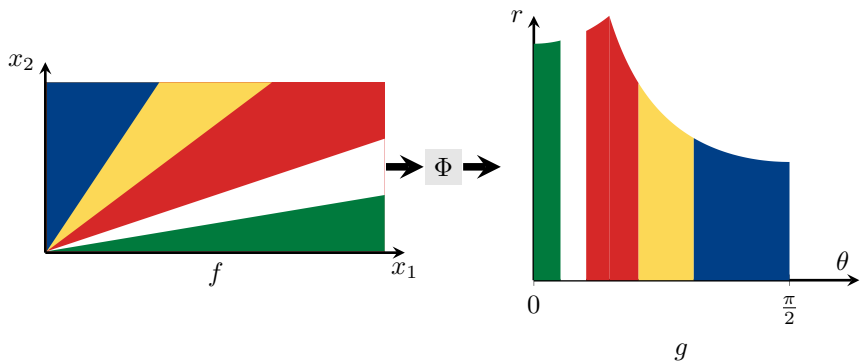}};

    \draw [->] (input.south west) -- (input.south east) -- ++(.25,0) node [below] {$x_1$};
    \draw [->] (input.south west) -- (input.north west) -- ++(0,.25) node [left] {$x_2$};

    \node[rectangle, fill=gray!20, right = .5 of input] (S) {$\COV_\varphi$};

     \node [right = .5 of S, inner sep=0, label=below:$g$] (output) {\usebox\myPolar};

    \draw[->,line width=1mm] (input) -- (S);
    \draw[->,line width=1mm] (S) -- (output);
  \end{tikzpicture}

  \caption{A change of variables applied to the flag of Seychelles, $f$.
    Each point is expressed in terms of its polar coordinates, i.e.,
    $\varphi(\theta, r) = (r \cos(\theta), r \sin(\theta))$.
    To make $\varphi$ a diffeomorphism, its domain must be restricted to
  $\theta \in [0, 2\pi)$ and $r \in (0, \infty]$.} %
  \label{fig:COV}
\end{figure}

\begin{example}[X-ray computed tomography]
  The idealized forward model for a variety of \ac{CT} imagining modalities,
  including X-ray \ac{CT}, \ac{PET}, \ac{SPECT}, and \ac{ET} is called the X-ray transform.
It measures the line integrals of an image and is usually given in 2D by
\begin{equation}
  \xray_\theta \{f\} (y) = \int_{\mathbb{R}} f(y \iv{\theta} + t \iv{\theta}^\perp) dt,
\end{equation}
where $\theta$ is an angle, $\iv{\theta} = (\cos \theta, \sin \theta)$ is a unit vector,
and $\iv{\theta}^\perp = (-\sin \theta, \cos \theta)$ is its perpendicular. 
It turns out that the X-ray transform can also be written in terms of the building blocks we have already defined as
\begin{equation}
  (\fourier_{\text{1D}} \circ \xray_\theta) \{f\}(\omega) = (\COV_{\varphi_\theta} \circ \fourier_{\text{2D}}) \{f\}(\omega), 
\end{equation}
where $\varphi(\omega) = \omega \iv{\theta}$
and the subscripts on the Fourier transforms remind us that
that the image is two dimensional while its X-ray transform is one dimensional.
We verify this by writing
\begin{align}
  (\COV_{\varphi_\theta}  \circ \fourier_{\text{2D}}) \{ f \}(\omega) 
  &=  \int_{\mathbb{R}^2} f(\iv{x}) \e^{-\imath \ip{ \omega \iv{\theta} }{\iv{x}}} d\iv{x}\\ %
  &=  \int_{\mathbb{R}^2} f(\vv{A}\vv{a}) \e^{-\imath \ip{\omega \vv{\theta}}{\vv{A}\vv{a}}} d\vv{a}\\ %
  &=  \int_{\mathbb{R}^2} f(\vv{A}\vv{a}) \e^{-\imath \omega [\vv{a}]_1} d\vv{a}\\ %
  &= \int_{\mathbb{R}}\int_{\mathbb{R}} f(\vv{A}\vv{a}) d[\vv{a}]_2  \e^{-\imath \omega [\vv{a}]_1} d[\vv{a}]_1\\ %
  &=  \fourier_{\text{1D}} \circ \xray_\theta f (\omega), %
\end{align}
where the trick is that we can always use the orthogonal matrix $\vv{A} =
\begin{bmatrix}
  \vv{\theta} & \vv{\theta}^\perp
\end{bmatrix}$ to express $\iv{x}$ as $\vv{A} \vv{a}$ for some $\vv{a}$.
The result is called the Fourier slice theorem or central slice theorem,
because it relates the 1D Fourier transform of the X-ray transform of an image
to a slice through its 2D Fourier transform.
For more information see \cite{kak_principles_2001} for X-ray \ac{CT},
with additional details on \ac{PET} in \cite{bailey_positron_2005}
and \ac{ET} in \cite{frank_electron_2006}.

\end{example}

\section{Discretization}
\label{sec:discretization}
We have, so far, presented a toolbox of operators that can be combined to model many biomedical imaging modalities.
Generally speaking, we will need to implement these operators in software
to solve image reconstruction problems.
But, how can we do this when the definitions are in the continuous domain?
That is, how can we represent $f \in L_2(\mathbb{R}^n)$ in a computer?
This is the problem of discretization.

One approach to the problem is to discretize \emph{at the end}:
formulate the reconstruction algorithm in the continuous domain and develop discrete versions of each of the necessary operations.
Anywhere a function is needed, it is represented by a finite vector of its samples;
the evaluation of the function at other points is achieved via interpolation, and
integrals are replaced with sums by the trapezoidal rule, etc.
We will discuss this approach further in Section~\ref{sec:direct}.

We focus here on discretizing \emph{at the beginning} via the synthesis operator.
\begin{definition}
  The \emph{synthesis operator} $\synth_{\{\beta_n\}_{n=1}^{N}} : \mathbb{R}^N \rightarrow L_2(\mathbb{R}^d) $,
  expresses a function as a sum of a finite number of basis functions,
  \begin{equation}
    f(\iv{x}) = \synth_{\{\beta_n\}_{n=1}^{N}} \{\vv{f} \}(\iv{x}) = \sum_{n=1}^{N} [\vv{f}]_n\beta_n(\iv{x}),
  \end{equation}
  where $\beta_n \in  L_2(\mathbb{R}^d)$,
  $f$ denotes the original function and
$\vv{f}$ is the vector of its coefficients.
\end{definition}
The synthesis operator allows for a simple discretization of
any continuous forward model, $H: L_2(\mathbb{R}^d) \rightarrow \mathbb{R}^M$,
according to $\vv{H} = H \circ \synth$, where $\vv{H}$ is an $M \times N$ matrix, called the system matrix. 
The discretized imaging model is then $\vv{g} = \vv{H} \vv{f}$.
This discretization approach is also called \emph{finite series expansion};
see \cite{censor_finite_1983} for an early discussion.
Note the dimensions of the system matrix:
$M$ is the number of measurements and 
$N$ is the number unknowns to reconstruct;
these can both easily be on the order of millions.
So, while $\vv{H}$ is a matrix, it is usually too large to store.
Instead, its elements are computed whenever they are needed.

Recalling that we can view the linear measurement process as taking inner products with detector functions \eqref{eq:detector},
we can express the elements of the system matrix as inner products between the basis functions and these detector functions,
\begin{equation}
   [\vv{H}]_{m,n} =  \ip{\eta_m}{\beta_n} = \int_{\mathbb{R}^d} \eta_m(\iv{x}) \beta_n(\iv{x}) d\iv{x}.
\end{equation}

For simplicity,
the basis functions are often chosen to be shift-invariant, $\beta_n(\iv{x}) = \beta(\iv{x} - \iv{x}_n)$,
where the $\iv{x}_n$ form a regular grid,
e.g., 
$\iv{x}_n \in \{\iv{x} = T \iv{k} + \iv{x}_0 : \iv{k} \in [0, 1, \dots, N_1-1] \times [0, 1, \dots, N_2-1]\}$.
They may also be multiplicatively separable,
\begin{equation}
  \beta(\iv{x}) = \beta(x_1)\beta(x_2)\cdots \beta(x_d),
\end{equation}
where (in a slight abuse of notation) $\beta(\iv{x})$ is the $d$-dimensional separable basis function and $\beta(x)$ is the 1D function from which it is built.

\begin{example}[Pixel model]
Because the ultimate result of image reconstruction is a digital image,
it is natural to use the square pixel as a basis function, i.e., $\beta(\iv{x}) = \rect(\iv{x})$,
see Figure~\ref{fig:pixel} for an example.
The downside of the pixel model is, generally, its lack of smoothness:
It is not differentiable, so it is not appropriate for continuous forward models involving differentiation.
It is also not bandlimited:
its Fourier transform is
\begin{equation}
\label{eq:sinc}
\fourier\{\rect\}(\omega) = \sinc(\omega) = \frac{\sin \pi \omega}{\pi \omega}
\end{equation}
and therefore may not interact well with forward models that amplify high frequencies.
Despite this, the pixel model is ubiquitous in practice.
\end{example}
\begin{figure}[!htbp]
  \centering
  \begin{subfigure}{.38\linewidth}
      \includegraphics[width=\textwidth]{Figures/flag.pdf}
    \caption{no discretization}
  \end{subfigure}
  \begin{subfigure}{.38\linewidth}
    \includegraphics[width=\textwidth]{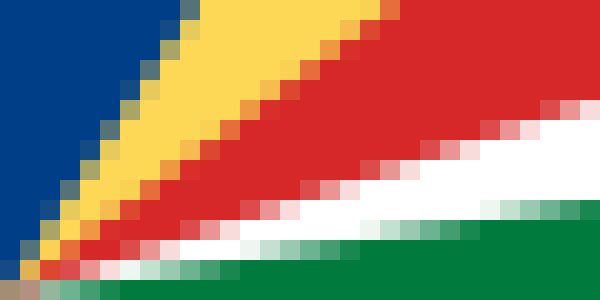}
    \caption{$30 \times 15$ pixels}
  \end{subfigure}
  \begin{subfigure}{.38\linewidth}
    \includegraphics[width=\textwidth]{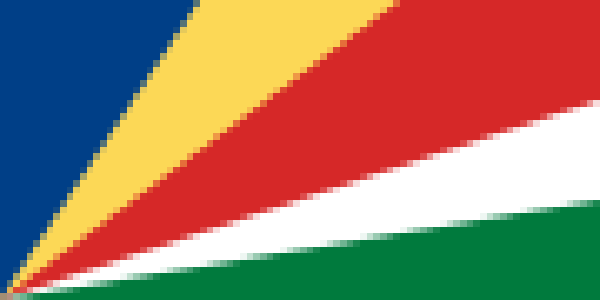}
    \caption{$90 \times 45$ pixels}
  \end{subfigure}
  \begin{subfigure}{.38\linewidth}
    \includegraphics[width=\textwidth]{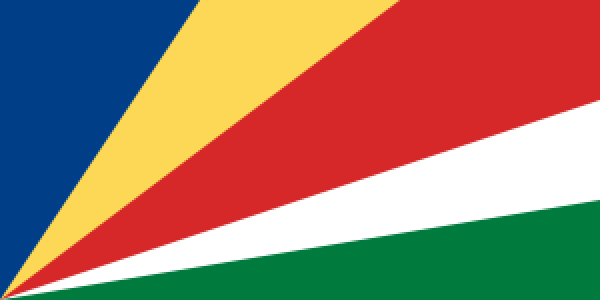}
    \caption{$300 \times 150$ pixels}
  \end{subfigure}
  \caption{The flag of Seychelles, discretized in three different pixel bases.
    Very fine discretizations, like (d), can be arbitrarily close to a target function,
    but, in general, they are not equal.
    In the digital version of this document,
    you should be able to zoom in on (d) and see the pixels,
    while (a) should remain sharp no matter the zoom level.
  }
  \label{fig:pixel}
\end{figure}

Although they are beyond the scope of the current discussion,
there are many more tools that can be used for discretization,
e.g., wavelets~\cite{mallat_wavelet_2009} and dictionary methods~\cite{tosic_dictionary_2011}.

\section{Summary}
We have presented a set of building blocks---%
linear operators that can be composed
to model a variety of physical measurement processes.
In the following, we give a nonexhaustive
list of these imaging modalities.
It is important to note that these examples are simplified:
state-of-the-art models typically account for additional physical effects.
One must also include the effect of the measurement noise,
which we discuss in the following sections.

\begin{description}[style=unboxed,leftmargin=0cm]
\item[2D or 3D tomography] uses X-rays.
  We can model the measurements as $\{ (\samp_{X_i} \circ \xray_{\theta_i}) \{f\} \}$,
  which is a set of sampled X-ray projections
  taken from different directions,
  where $i$ indexes the projection direction.
  Variations include parallel
  (where the sampling locations do not depend on the projection directions, $X_i = X$),
  cone beam, and spiral sampling patterns.

\item[3D microscopy] uses fluorescence.
  We can model the measurements as $ (\samp_{\mathbb{Z}^3} \circ \conv_h )\{ f \}$,
  which is a blurred, sampled version of $f$.
  Variations include brightfield, confocal, and light sheet microscopy.

\item[\ac{SIM}] uses fluorescence.
  We can model the measurements as $\{( \samp_{\mathbb{Z}^2} \circ \conv_h \circ \mult_{w_i}) \{ f \} \}_i$,
  which is a set of modulated, blurred, and sampled versions $f$,
  where $i$ indexes the modulation pattern.
  Variations include performing 3D reconstructions
  or using non-sinusoidal patterns.

\item[\ac{PET}] uses gamma rays.
  We can model the measurements as $\{( \samp_{X_i} \circ \xray_{\theta_i}) \{f\} \}$,
  which, like in tomography, are sampled X-ray projections.
  We use the subscript $X_i$ to indicate that the sampling positions
  are dependent on the projection direction,
  and note that there may be as few as one sample per direction.
  Variations include time-of-flight \ac{PET}.

\item[\ac{MRI}] uses magnetic fields and measures radio waves.
  We can model the measurements as  $\{ (\samp_X \circ \fourier \circ \mult_{w_i}) \{ f \} \}$,
  which are a set of samples of the Fourier transform of $f$
  after weighting by several different coil sensitivities,
  indexed by $i$.
  There are several variations in the sampling pattern,
  including Cartesian and radial sampling,
  and the method can also be used for dynamic processes,
  e.g., cardiac \ac{MRI}.

\item[Optical diffraction tomography] uses coherent light.
  Although the problem is, in general, nonlinear,
  using the first Born approximation~\cite{wolf_three_1969},
  we can model the measurements as $ (\samp_X \circ \fourier  \circ  \mult_h ) \{ f \} $,
  where $h$ is related to the incident wave.
  Variations include holography and grating-based interferometry.

\end{description}

In each case above, stated a measurement model
that goes from a continuously defined $f: \mathbb{R}^d \to \mathbb{R}$
to a discrete $\vv{g} \in \mathbb{C}^M$.
By additionally making use of a synthesis operator,
each of these can be converted to
a system matrix that relates
the discrete image we want to recover
to the discrete measurements we have access to.
This matrix plays a key role in reversing the measurement process
to arrive at an estimate of the image of interest,
which is the topic of the following sections.

\subsection{Further Reading}
Many of these building blocks are fundamental concepts in signal processing;
for a textbook treatment, see the classic \cite{oppenheim_discrete_2009}
or \cite{vetterli_foundations_2014}.
For more information on the modalities we have discussed,
see the references listed at the end of each example.
For a MATLAB software library based on this methodology,
see \cite{unser_globalbioim_2017}.

\chapter{Classical Image Reconstruction}
\label{chap:classical}
Assuming that we have an accurate, implementable
forward model, $H$, and measurements $\vv{g}$,
our task now is to reconstruct an image.
Our focus is on discretizing immediately;
i.e., using a synthesis operator to create a system matrix, $\vv{H}$,
because doing so gives access to simple, implementable
reconstruction algorithms.
But, before dispensing with the operator $H$
in favor of the matrix $\vv{H}$,
we need to discuss direct inversion methods,
which are derived in the continuum.

\section{Direct Inversion}
\label{sec:direct}
Many of the imaging models we have discussed are invertible
under the assumption that there is no noise and
a continuum of measurements:
there is a formula that relates the measurements to the image
we are trying to recover.
As a concrete example,
we recall that the Fourier transform is invertible.
Thus, 
since \ac{MRI} is based on the Fourier transform,
it seems that the inverse Fourier transform should provide
an \ac{MRI} reconstruction technique.
To develop a direct inversion algorithm,
we formulate a forward model in the continuum (i.e., ignoring sampling),
$H : L_2(\mathbb{R}^d) \rightarrow L_2(\mathbb{R}^d) $
and derive (analytically) its inverse, $H^{-1}$.
Such an inverse will usually involve integrals of a continuous version
of the measurements.
To actually implement  $H^{-1}$,
we replace these integrals with sums
over the known measurements
and use interpolation whenever we need to evaluate a quantity
in between its samples.
See \cite{lewitt_reconstruction_1983} for a early discussion
of this approach.

\begin{example}[Filtered back projection]
  The canonical example of direct inversion is the \ac{FBP} algorithm
  for X-ray \ac{CT}, which dates back to 1971~\cite{ramachandran_three_1971}.
  Starting with the X-ray transform as a forward model,
  we use the Fourier slice theorem to derive the following inverse
  \begin{equation}
    \label{eq:FBP}
    H^{-1} \{g\} (\iv{x}) = \int_0^{\pi}
    \fourier^{-1}_{\text{1D}} \mult_{|\omega|}  \fourier_{\text{1D}}
    \xray_\theta f \left( \langle \vv{\theta}^\perp,\iv{x}\rangle \right) d\theta,
  \end{equation}
  with $\iv{x} \in \mathbb{R}^2$ and $\iv{\theta}^\perp = (-\sin \theta, \cos \theta)$.
  This equation tells us to take each projection (from 0 to $\pi$),
  convolve by a filter with impulse response $|\omega|$,
  and, finally, back project it (smear it across the image).
  The name filtered back projection is therefore a bit misleading
  because
  the filtering occurs before the back projection.

  The formula \eqref{eq:FBP} looks simple, but several approximations
  are needed to implement it from discrete measurements.
  First, the integral over $\theta$ must be replaced with a sum.
  Second, the filter must be implemented digitally,
  with careful handling of the boundary, high frequencies
  (because they are amplified),
  and zero frequency (because it is removed).
  Finally, the filtered projections must be interpolated
  to back project them.
  Despite these approximations, the \ac{FBP} is a robust
  and high-quality method,
  especially when the number of projections is large.
  
\end{example}

\section{Variational Methods}
To avoid the challenges associated
with discretizing a direct inversion method,
we prefer to work with a linear forward model
that takes the form of a matrix, $\vv{H}$.
In this setting,
the simplest approach to image reconstruction would be to solve 
\begin{equation}
\label{eq:problem_0}
\vv{H}\vv{f} = \vv{g}
\end{equation}
for $\vv{f}$.%
\footnote{While the components of $\vv{f}$ are actually
  expansion coefficients in some underlying basis
  (see Section~\ref{sec:discretization}),
  we shall, with a slight abuse of language,
  refer to them as pixels.}
The first problem with this formulation
is that $\vv{H}$ is rarely invertible---%
in most scenarios, we want to get as many image pixels, $N$,
out of as few measurements, $M$, as possible,
leading to a short, wide system matrix with no left inverse.
Practically, this means that there will be many $\vv{f}$'s that satisfy \eqref{eq:problem_0}.
If any of the measurement vectors are linearly dependent
(which can happen in any setting,
but is guaranteed when $M > N$),
any amount of measurement noise will be enough to move $\vv{g}$
out of the span of $\vv{H}$, leading to \eqref{eq:problem_0} having no solution.
Finally, even when a unique solution does exist ($M=N$),
the solution can be very sensitive to noise,
to the point were direct inversion of $\vv{H}$ is useless
(see Figure~\ref{fig:conv_example} for an example).
Thus, \eqref{eq:problem_0} is an example of an ill-posed problem.

\begin{definition}
A \emph{well-posed} problem satisfies:
\begin{enumerate}
\item A solution exists.
\item The solution is unique.
\item The solution is a continuous function of the measurements.
\end{enumerate}
A problem that is not well-posed is call \emph{ill-posed}.
\end{definition}

\begin{figure}[!htpb]
  \centering
  \begin{subfigure}[t]{.3\linewidth}
    \includegraphics[width=\linewidth]{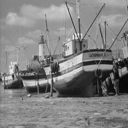}
    \caption{image, $\vv{f}$}
  \end{subfigure}
  \begin{subfigure}[t]{.3\linewidth}
    \includegraphics[width=\linewidth]{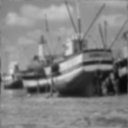}
    \caption{measurements, $\vv{H} \vv{f} = \vv{g}$}
  \end{subfigure} 
   \begin{subfigure}[t]{.3\linewidth}
    \includegraphics[width=\linewidth]{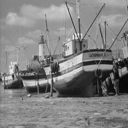}
    \caption{direct inversion, $\vv{H}^{-1}\vv{g}$}
  \end{subfigure}\\
  \begin{subfigure}[t]{.3\linewidth}
    \includegraphics[width=\linewidth]{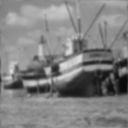}
    \caption{noisy measurement, $\vv{g} +\vv{n}$}
  \end{subfigure}
  \begin{subfigure}[t]{.3\linewidth}
    \includegraphics[width=\linewidth]{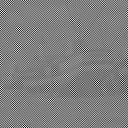}
    \caption{noisy direct inversion, $ \vv{H}^{-1} (\vv{g} +\vv{n})$}
  \end{subfigure}
  \caption{Effect of measurement noise on direct inversion of discrete convolution.
    Each pixel of the measurement (b) is a local average of pixels in the image (a).
    This operation is invertible (c), but even a tiny amount of noise (d)
    is amplified by the inversion, leading to heavy artifacts in the reconstruction (e).
  Image from the \acf{USC-SIPI} Image Database.}
  \label{fig:conv_example}
\end{figure}

A more promising approach,
which is the focus of this section and the next,
is to formulate reconstruction as a minimization problem.
These approaches are sometimes called \emph{variational methods};
while this term has precise meanings in fields such as
differential equations, quantum mechanics, and Bayesian inference,
in our context it only means we aim to minimize something.
As a first example,
we can seek a reconstruction that minimizes the disagreement with the measurements according to
\begin{equation}
  \tilde{\vv{f}} = \argmin_{\vv{f} \in \mathbb{R}^N} \| \vv{g} - \vv{H} \vv{f} \|^2_2.
  \label{eq:problem_LS}
\end{equation}
From a geometric perspective, we see that this problem must have
at least one solution:
we can project $\vv{g}$ onto the span of $\vv{H}$,
resulting in a denoised measurement, $\vv{g}'$
and solve $\vv{H}{\vv{f}} = \vv{g}'$.
However, the problem may still be ill-posed.
Equating the gradient
(i.e., the vector of partial derivatives with respect to the unknown
$\vv{f} = (f_1, \dots, f_N)$)
of the objective function in \eqref{eq:problem_LS}
with zero results in the so-called normal equation,
\begin{equation}
  \vv{H}^* \vv{g} = \vv{H}^* \vv{H} \vv{f},
  \label{eq:normal}
\end{equation}
which $\vv{f}$ must satisfy to be a solution.
Here, $\vv{H}^*$ denotes the conjugate transpose.
When the number of linearly-independent measurements (rank of $\vv{H}$)
is at least the number of unknowns (length of $\vv{f}$),
then there is a unique solution $ (\vv{H}^* \vv{H})^{-1}\vv{H}^* \vv{g} =\vv{f}$.
Otherwise, there are infinitely-many solutions.
Specifically, the solutions form an affine subspace of $\mathbb{R}^N$,
$\mathcal{N}(\vv{H}) + \tilde{\vv{f}}$, where $\mathcal{N}(\vv{H}) = \{\vv{f} : \vv{H}\vv{f} = \vv{0} \}$ is called the null space of $\vv{H}$ and $\tilde{\vv{f}}$ is one solution of \eqref{eq:problem_LS}.

\begin{remark}[Does the square matter?] \label{rem:square}
  \begin{shaded}
      One challenge in understanding the biomedical reconstruction literature
      is the plethora of similar-looking problem statements that
      may or may not be equivalent.
      The first such example is 
      \begin{equation}
        \argmin_\vv{f} \| \vv{g} - \vv{H} \vv{f} \|_2 
        \quad \text{versus} \quad 
        \argmin_\vv{f} \| \vv{g} - \vv{H} \vv{f} \|^2_2.
      \end{equation}
      These expressions are equivalent because the function $|\cdot|^2 : x \mapsto |x|^2$ is monotonically increasing
      over the nonnegative real numbers.
      We will see a less-trivial generalization of this idea in Remark~\ref{rem:squareTik}.
      
      The version without the square is more natural:
      it gives a reconstruction that agrees with the measurements in the
      sense of Euclidean distance.
      The benefit of the square is that it cancels the square root
      in the norm, which leads to a simpler expression for the gradient.
  \end{shaded}
    \end{remark}

    Unfortunately, the infinitely-many solutions case is more the norm
    than the exception.
This is, again, because we want the highest possible resolution, $N$,
from a fixed number of measurements, $M$, determined by the hardware.
As a result, the number of unknowns is often larger than the number of measurements.
Thus, \eqref{eq:problem_LS} is generally an ill-posed problem.

\begin{example}[Null space]
Consider a toy problem where
\begin{equation}
  \vv{H} =
  \begin{bmatrix}
    1 & 0 & 1\\
    0 & 1 & -1 \\
    1 & 1 & 0
  \end{bmatrix},
  \quad \text{and} \quad
  \vv{f}_{\text{true}} = 
  \begin{bmatrix}
    1 \\ 1\\ 2
  \end{bmatrix}.
\end{equation}
The true measurement process is $\vv{g} = \vv{H}\vv{f}_{\text{true}} + \vv{n} =
\begin{bmatrix}
  3 & -1 & 2.1
\end{bmatrix}^T
$ and we want to reconstruct by solving \eqref{eq:problem_LS}.
If we use a standard solver
(MATLAB's \texttt{pcg} function)
on the normal equation \eqref{eq:normal},
the result is $\tilde{\vv{f}} \approx
\begin{bmatrix}
    1.70 &
    0.37 &
    1.33
\end{bmatrix}^T$, giving a sum of squared errors of 0.0033.
But, by changing the initialization of the solver,
infinitely many equally good solutions can be obtained:
\begin{center}
\begin{tabular}{@{}lll@{}}
\toprule
  $\vv{f}_0$ & $\tilde{\vv{f}}$ & $\| \vv{g} - \vv{H} \vv{f} \|^2_2$ \\ \midrule
$\begin{bmatrix}
    0 & 0 & 0
\end{bmatrix}^T$ &  
$ \begin{bmatrix}
    1.70 & 0.37 & 1.33
\end{bmatrix}^T$ & .0033 \\
$\begin{bmatrix}
    0 & 0 & 1
\end{bmatrix}^T$ &  
$ \begin{bmatrix}
1.37&
    0.70&
    1.67
\end{bmatrix}^T$ & .0033 \\
$ \begin{bmatrix}
    13 & 8 & 18
\end{bmatrix}^T $&  
$ \begin{bmatrix}
-2.30 & 4.37 & 5.33
\end{bmatrix}^T$ & .0033 \\
\multicolumn{1}{c}{\vdots} & \multicolumn{1}{c}{\vdots} & \multicolumn{1}{c}{\vdots} \\
\bottomrule
\end{tabular}
\end{center} 
This effect occurs because $\vv{H}$ has a non-zero null space,
specifically $\vv{H} \begin{bmatrix} 1 & -1 &-1 \end{bmatrix}^T = \vv{0}$;
note that each solution can be obtained from another by adding a scaled version
of this null space vector.
\end{example}

\section{Tikhonov regularization}
When a formulation is ill-posed in the sense of having no solution,
it is a dead end.
But, when a formulation merely has infinitely-many solutions,
all we need is a mechanism to pick one of them to obtain a well-posed problem;
Tikhonov regularization provides a classical mechanism to do this.

The idea is to introduce a regularization functional, $ \| \vv{L}\vv{f} \|^2_2$, 
where $\vv{L}$ is a linear operator that measures the ``roughness'' of $\vv{f}$
(or some other undesirable property).
The reconstruction can then be formulated as
\begin{equation}
  \label{eq:problem_tikhonov_constrained}
  \argmin_\vv{f}  \| \vv{L}\vv{f} \|^2_2
  \quad \text{subject to} \quad
  \| \vv{g} - \vv{H} \vv{f} \|^2_2 \le \sigma^2,
\end{equation}
where $\sigma$ is a scalar constant selected based on the expected level of noise.
An equivalent formulation is 
\begin{equation}
  \label{eq:problem_tikhonov}
  \argmin_\vv{f}  \| \vv{g} - \vv{H} \vv{f} \|^2_2 + \lambda  \| \vv{L}\vv{f} \|^2_2
  \end{equation}
  where $\lambda$ is a scalar constant controlling the strength of the regularization (see Figure~\ref{fig:effect-of-lambda}).
  See Remark~\ref{rem:const-vs-reg} for a discussion of the equivalence
  of these two problems.

\begin{figure}[!htbp]
  \centering
  \hfill%
  \begin{subfigure}[b]{.2\linewidth}
    \includegraphics[width=\linewidth]{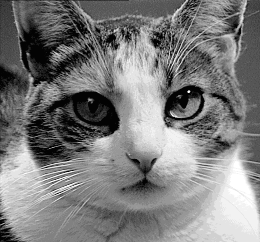}    
    \caption{ground truth image}
  \end{subfigure}\hfill
 \begin{subfigure}[b]{.2\linewidth}
    \includegraphics[width=\linewidth]{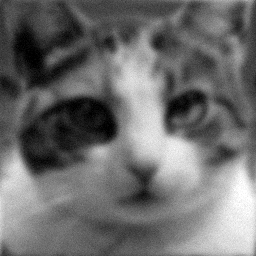}    
    \caption{degraded image}
  \end{subfigure}\hfill\null\\
 \begin{subfigure}[b]{\linewidth}
  \includegraphics[width=.2\linewidth]{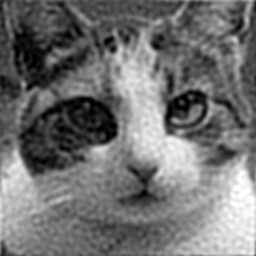}%
  \includegraphics[width=.2\linewidth]{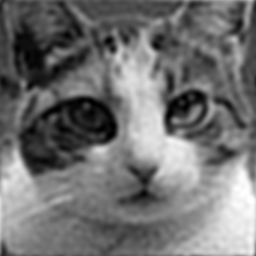}%
  \includegraphics[width=.2\linewidth]{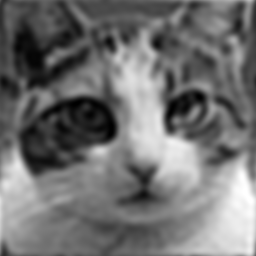}%
  \includegraphics[width=.2\linewidth]{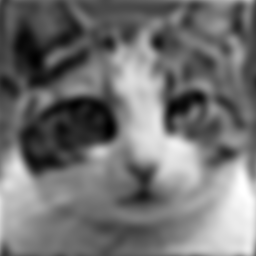}%
  \includegraphics[width=.2\linewidth]{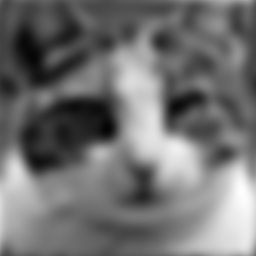}
    \caption{reconstructed images, increasing $\lambda \rightarrow$}
  \end{subfigure}\\
\caption{Illustration of the effect of changing the Tikhonov regularization parameter, $\lambda$,
  when the regularization term is the $\ell_2$-norm of the gradient,
  which favors smoothness.
Stronger regularization reduces noise, but also reduces sharpness;
this can be thought of as a kind of bias--variance tradeoff.}
\label{fig:effect-of-lambda}

\end{figure}

\begin{remark}[Does the square matter now?] \label{rem:squareTik}
  \begin{shaded}
    We saw in Remark~\ref{rem:square} that the square in 
    the least squares formulation is only a mathematical convenience.
    Now, with the addition of the regularization term,
    we can compare a more general case,
    \begin{equation}
      \label{eq:sq-now-vs}
      \argmin_\vv{f} J_1(\vv{f})  + \lambda_{\text{L}} J_2( \vv{f} )
      \quad \text{versus} \quad 
      \argmin_\vv{f} \Psi_1 ( J_1(\vv{f} ))  + \lambda_{\text{R}} \Psi_2 (J_2( \vv{f} )),
    \end{equation}
    where $J_1$ and $J_2$ are positive, differentiable functionals,
    $\lambda_{\text{L}}$ and $\lambda_{\text{R}}$ are positive numbers,
    and $\Psi_1$ and $\Psi_2$ are monotonically increasing, differentiable functions.
    Intuitively, these are different problems when $\lambda_{\text{L}} = \lambda_{\text{R}}$
    because the $\Psi$s will change the relative importance of $J_1$ and $J_2$
    (which we can think of as representing the data and regularization terms, respectively).
    The problems are equivalent in the sense that
    for any value of $\lambda_{\text{L}}$, we can always find a value of  $\lambda_{\text{R}}$ such that the problems are the same.
    To see this, take a solution of the left problem, $\tilde{\vv{f}}$.
    At this point, the gradients of $J_1$ and $\lambda_{\text{L}} J_2$
    are equal and opposite.
    In the right problem, at $\tilde{\vv{f}}$,
    the chain rule tells us that the gradient of the $J_1$ term
    has been multiplied by  $\Psi_1'( J_1( \tilde{\vv{f}} ))$
    and the gradient of the $J_2$ term
    has been multiplied  by  $\Psi_2'( J_2( \tilde{\vv{f}} ))$.
    Because both of these multipliers are positive,
    we can set
    \begin{equation}
    \lambda_{\text{R}} = \frac{\Psi_1'( J_1( \tilde{\vv{f}} )) \lambda_{\text{L}}}{\Psi_2'( J_2( \tilde{\vv{f}} ))},
  \end{equation}
    which makes $\tilde{\vv{f}}$ a critical point for the right problem in \eqref{eq:sq-now-vs}.
    We can make the same argument in the reverse direction,
    and a similar argument can be made with nondifferentiable functions by using subgradients.

    In practice, we set $\lambda$ by hand to give the best-looking reconstruction;
    this argument tells us that if we can find a good $\lambda$ for the left formulation,
    we will be able to find a good $\lambda$ in the right formulation.
    As a result, we have a lot of flexibility to add or remove, e.g., squares,
    from data fidelity and regularization terms without affecting the set of solutions we get.
  \end{shaded}
\end{remark}

In the Tikhonov formulation,
the normal equation becomes
\begin{equation}
  \vv{H}^* \vv{g} = (\vv{H}^* \vv{H} + \lambda \vv{L}^* \vv{L}) \vv{f}.
\end{equation}
This expression reveals the main advantage of the formulation:
the problem will have a unique solution when the intersection of the null spaces of $\vv{H}$ and $\vv{L}$ is zero.
One easy way to achieve this is to let $\vv{L}$ be full rank,
 e.g., when $\vv{L}$ is the identity matrix, 
the solution is unique and the regularization penalizes large values in $\vv{f}$.

\begin{remark}[Constraints versus regularization]
  \label{rem:const-vs-reg}
  \begin{shaded}
  We have now seen (and will continue to see) data and regularization terms
  on $\vv{f}$ expressed as both functionals (also called penalties), e.g., $\|\vv{L} \vv{f}\|_2$,
  and as constraints, e.g., $\|\vv{g} - \vv{H}\vv{f}\|_2 \le K$.
  The natural question is to compare
  \begin{equation}
    \argmin_{\vv{f}} J_1(\vv{f}) + \lambda J_2(\vv{f})
    \quad \text{versus} \quad
    \argmin_{\vv{f}} J_1(\vv{f}) \quad \text{s.t.} \quad J_2(\vv{f}) \le \sigma,
  \end{equation}
  where $J_1$ and $J_2$ are positive functionals and $\lambda$ and $\sigma$ are positive numbers.
  We will call these the penalized and constrained formulations;
  the first is also called Tikhonov regularization and the second
  Ivanov or Morozov regularization, depending on whether
  $J_1$ is the data term (Ivanov) or the regularization term (Morozov).
  We can afford to be loose in interchanging these forms
  because, roughly, solutions of one are also solutions of the other (when the $\lambda$ and $\sigma$ are chosen correctly).
  We make this notion precise in the next two theorems;
  for a discussion of the engineering implications, skip ahead.

  \begin{theorem}
    For every choice of $\lambda$, for each solution of the penalized problem,
    there is a choice of $\sigma$ that makes it a solution of the constrained problem. 
  \end{theorem}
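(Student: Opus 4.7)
The plan is very direct: given a solution $\tilde{\vv{f}}$ of the penalized problem for a fixed $\lambda > 0$, I would simply set $\sigma = J_2(\tilde{\vv{f}})$. Then $\tilde{\vv{f}}$ is automatically feasible for the constrained problem (it meets the constraint with equality), so the only thing left to verify is its optimality among feasible points.

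For optimality I would argue by contradiction. Suppose there existed some $\vv{f}'$ with $J_2(\vv{f}') \le \sigma = J_2(\tilde{\vv{f}})$ and $J_1(\vv{f}') < J_1(\tilde{\vv{f}})$. Multiplying the first inequality by $\lambda > 0$ and adding it to the strict inequality on $J_1$ would yield
\begin{equation}
  J_1(\vv{f}') + \lambda J_2(\vv{f}') < J_1(\tilde{\vv{f}}) + \lambda J_2(\tilde{\vv{f}}),
\end{equation}
contradicting the assumption that $\tilde{\vv{f}}$ minimizes the penalized objective. Hence no such $\vv{f}'$ exists and $\tilde{\vv{f}}$ solves the constrained problem with $\sigma = J_2(\tilde{\vv{f}})$.

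I expect no serious obstacle: the argument uses only positivity of $\lambda$ and monotonicity of addition, and does not require differentiability, convexity, or any structure on $J_1, J_2$ beyond being real-valued functionals. The only subtlety worth flagging is the boundary case $\lambda = 0$, in which the penalized problem ignores $J_2$ and the claim still holds vacuously by choosing $\sigma$ large enough, together with the mild convention that ``for every choice of $\lambda$'' refers to $\lambda > 0$ (consistent with how $\lambda$ is used earlier in the section). A companion argument in the reverse direction — producing a $\lambda$ from a given $\sigma$ — would be the harder one, since it typically requires convexity and a Slater-type condition to invoke Lagrange duality; but that is the content of the next theorem, not this one.
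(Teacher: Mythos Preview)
Your proposal is correct and follows essentially the same argument as the paper: set $\sigma = J_2(\tilde{\vv{f}})$ and derive a contradiction to penalized optimality from any strictly better feasible point. Your added remarks on the $\lambda = 0$ boundary case and on why the converse direction is harder are accurate but go beyond what the paper's proof contains.
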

  \begin{proof}
    Fix $\lambda$ and assume $\vv{f}^P$ is a solution to the penalized problem.
    Set $\sigma = J_2(\vv{f}^P)$.
    Now for sake of contradiction, assume there exists $\vv{f}^C$
    that is a better solution of the constrained problem than $\vv{f}^P$,
    i.e., $J_1(\vv{f}^C) < J_1(\vv{f}^P) $ and $J_2(\vv{f}^C) \le \sigma$.
    Then we have $J_1(\vv{f}^R) + \lambda J_2(\vv{f}^R) < J_1(\vv{f}^L) + \lambda J_2(\vv{f}^L)$,
    which contradicts $\vv{f}^P$ being a solution to the penalized problem.
    Thus, there is no such $\vv{f}^C$, which means $\vv{f}^P$
    is a solution to the constrained problem.
  \end{proof}

  \begin{theorem}
    Assume that the penalized problem has a solution
    (not necessarily unique) for every $\lambda \ge 0$.
    Then,
    for every choice of $\sigma$ and any each solution of the constrained problem,
    there is a choice of $\lambda$ that makes it a solution of the penalized problem.
  \end{theorem}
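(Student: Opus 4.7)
The plan parallels the previous theorem, but now the task is to extract $\lambda$ from a given $\sigma$ and a constrained solution $\vv{f}^C$, which requires a selection argument rather than a direct construction.

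I would first dispense with the slack case. If $J_2(\vv{f}^C) < \sigma$, then sufficiently small perturbations of $\vv{f}^C$ remain feasible for the constrained problem, so $\vv{f}^C$ is a local minimizer of $J_1$ on the whole space; under the convex structure implicit in the chapter, this is also a global minimum, so taking $\lambda = 0$ reduces the penalized objective to $J_1$ alone and makes $\vv{f}^C$ a penalized solution immediately.

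For the active case $J_2(\vv{f}^C) = \sigma$, I would track penalized solutions across $\lambda$. Let $\vv{f}^P_\lambda$ denote some selection of penalized solutions and define $\varphi(\lambda) = J_2(\vv{f}^P_\lambda)$. Adding the two optimality inequalities that define $\vv{f}^P_\lambda$ and $\vv{f}^P_{\lambda'}$ and cancelling the $J_1$ terms yields $(\lambda - \lambda')(J_2(\vv{f}^P_\lambda) - J_2(\vv{f}^P_{\lambda'})) \le 0$, so $\varphi$ is weakly decreasing. I would then choose $\lambda^*$ so that $\varphi(\lambda^*) = \sigma$, giving a penalized solution $\vv{f}^P_{\lambda^*}$ with $J_2(\vv{f}^P_{\lambda^*}) = J_2(\vv{f}^C) = \sigma$. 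By the previous theorem, $\vv{f}^P_{\lambda^*}$ is also a constrained solution for this same threshold $\sigma$, and since $\vv{f}^C$ is optimal for the same constrained problem, the two share a common $J_1$ value: $J_1(\vv{f}^P_{\lambda^*}) = J_1(\vv{f}^C)$. Combining the two equalities, $\vv{f}^C$ achieves the same value of $J_1 + \lambda^* J_2$ as $\vv{f}^P_{\lambda^*}$, and is therefore itself a penalized solution at $\lambda^*$.

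The main obstacle is the intermediate-value step that produces $\lambda^*$: since $\varphi$ is only weakly decreasing and not a priori continuous, it may jump over $\sigma$ at some critical value, so an exact equality need not hold on the nose. Resolving this cleanly requires either continuity of $\varphi$ (which follows from strict convexity of the penalized objective) or a compactness-plus-limit argument that sets $\lambda^* = \inf\{\lambda \ge 0 : \varphi(\lambda) \le \sigma\}$ and extracts a limiting penalized solution from a sequence $\lambda_n \to \lambda^*$. Without the convex structure assumed throughout the chapter the statement can actually fail, so the proof is best read in that context.
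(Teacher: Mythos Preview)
Your approach is correct in spirit but takes a genuinely different route from the paper. You track $\varphi(\lambda) = J_2(\vv{f}^P_\lambda)$, the $J_2$-value of penalized minimizers, and look for the level $\sigma$; the paper instead tracks the \emph{value function} $F(\lambda) = \min_{\vv{f}} J_1(\vv{f}) + \lambda J_2(\vv{f})$ and looks for a $\lambda^*$ at which $F(\lambda^*)$ coincides with $J_1(\vv{f}^C) + \lambda^* J_2(\vv{f}^C)$, which directly forces $\vv{f}^C$ to be a penalized minimizer. The key practical difference is that $F$ is automatically continuous---it is a pointwise infimum of affine functions of $\lambda$, hence concave and continuous on $(0,\infty)$---so the intermediate value theorem applies without any extra structural hypotheses. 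By contrast, your $\varphi$ is only monotone and can jump, which is exactly the obstacle you flag in your last paragraph; you then have to invoke strict convexity or a compactness/limit argument to close the gap. Your case split into slack versus active constraint, and your use of the previous theorem to match $J_1$-values once the $J_2$-values agree, are both correct and not needed in the paper's argument. In short: both approaches work, but the paper's choice of auxiliary function sidesteps precisely the discontinuity issue that makes your argument conditional.
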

  \begin{proof}
    Fix $\sigma$ and assume $\vv{f}^C$ is a solution to the constrained problem.
    Consider the function $F(\lambda) = \min_{\vv{f}} J_1(\vv{f}) + \lambda J_2(\vv{f})$.
    We are searching for a value, $\lambda^*$, such that
    $F(\lambda^*) = J_1(\vv{f}^C) + \lambda^* J_2(\vv{f}^C)$, because this will make $\vv{f}^C$ a solution to the penalized problem.
    We know that, for any $\lambda^*$, $F(0) = \min_{\vv{f}} J_1(\vv{f}) \le J_1(\vv{f}^C) + \lambda^* J_2(\vv{f}^C)$
    and that we can pick $\lambda^\wedge$ large enough such that $F(\lambda^\wedge) \ge  J_1(\vv{f}^C) + \lambda^* J_2(\vv{f}^C)$.
    We can show that $F$ is continuous;
    thus, the intermediate value theorem gives us
    that there exists $\lambda^*$ such that $F(\lambda^*) = J_1(\vv{f}^C) + \lambda^* J_2(\vv{f}^C)$,
    which makes $\vv{f}^C$ a solution of the penalized problem.
  \end{proof}

  Practically, these problems are equivalent;
  theoretically, the relationship is more subtle
  (e.g., we have not shown that the solution sets can be made the same,
  only that the problems can be made to share a solution).
  When more is assumed about the functionals, we can say more.
  Generally, which form is chosen is a matter of personal preference.
  For a much deeper look at this issue, see~\cite{ciak_homogeneous_2012}.
  \end{shaded}
\end{remark}

\section{Bayesian Formulation}
\label{sec:classical-stat}

We now consider a statistical measurement model %
\begin{equation}
  \label{eq:model-stats}
  \vv{g} = \vv{H} \vv{f} + \vv{n},
\end{equation}
where $\vv{H}$ remains a deterministic measurement matrix,
but where $\vv{f}$ and $\vv{n}$ (and therefore $\vv{g}$) are random variables.
If we know (or can model or estimate)
the distribution of $\vv{n}$,
then the model \eqref{eq:model-stats} lets us determine
the \ac{PDF} of the measurements given a fixed image, $p( \vv{g} \mid \vv{f} )$.

To do reconstruction, we are more interested in
the distribution of $\vv{f}$ conditional on the measurements,
which we can obtain using Bayes' rule as
\begin{equation}
  p( \vv{f} \mid \vv{g} ) = \frac{p( \vv{g} \mid \vv{f} ) p(\vv{f})}{ p(\vv{g}) }.
   \label{eq:bayes}
\end{equation}
So, for a fixed set of measurements $\vv{g}$,
\eqref{eq:bayes} is a function 
that states how likely a given $\vv{f}$ is to have given rise to $\vv{g}$.
It requires a \emph{prior}---i.e., a model for $p(\vv{f})$---%
which is analogous to the choice of a regularizer
in the variational formulations we already presented.
Unfortunately, it also requires computation of $p(\vv{g})$,
which is typically intractable
(because it requires integrating over all $\vv{f}$'s that can create $\vv{g}$).
So, while \eqref{eq:bayes} in principle gives all the information
we might want in a reconstruction,
we usually can not use it directly.

Using the same statistical model,
we can build algorithms that seek a single image
satisfying some measure of optimality.
In one such approach, we seek the \ac{MMSE}
solution, which is given by
\begin{equation}
  \argmin_{\tilde{\vv{f}}}\mathbb{E}_{\vv{f}, \vv{g}}( \| \tilde{\vv{f}}(\vv{g}) - \vv{f}  \|^2_2)
  = \argmin_{\tilde{\vv{f}}} \int \| \tilde{\vv{f}}(\vv{g}) - \vv{f}  \|^2_2 p(\vv{f}, \vv{g})d\vv{f}d\vv{g},
  \label{eq:MMSE}
\end{equation}
where here $\tilde{\vv{f}}(\vv{g})$ is the reconstruction
(which depends on the measurements, $\vv{g}$)
and $\vv{f}$ is the ground truth.
One way to think of the \ac{MMSE} solution
is that if it is used to perform many reconstructions
from many different measurements,
it will have the lowest average error among all algorithms
(assuming that all the model assumptions are correct).
By equating the gradient of the functional in \eqref{eq:MMSE} with zero,
we find that the \ac{MMSE} is given by  $\tilde{\vv{f}} = \mathbb{E}(  \vv{f} \mid \vv{g}  )$,
which is the expectation of the conditional \ac{PDF}
given by \eqref{eq:bayes}.
Unfortunately, it is usually difficult to build algorithms
to find the \ac{MMSE} solution,
unless both $\vv{f}$ and $\vv{n}$ are
multivariate Gaussians.

A different approach is
to find the \ac{MAP} solution, which is
the mode of \eqref{eq:bayes},
\begin{equation}
  \label{eq:MAP}
  \argmax_{\vv{f}} p(\vv{f} \mid \vv{g} )
  = \argmax_{\vv{f}} p( \vv{g} \mid \vv{f} ) p(\vv{f}).
\end{equation}
Put another way, this is the $\vv{f}$ that is most likely
to have generated the measured $\vv{g}$.
As a word of caution:
when the prior on $\vv{f}$ and noise model are correct,
\ac{MAP} solutions do not, in general, follow them;
see \cite{nikolova_model_2007} for more discussion.
However, the \ac{MAP} approach is useful because
it provides a statistically coherent way
to account for noise and priors on the image,
while also leading to optimization problems that can be efficiently solved.

As a specific example of \ac{MMSE} and \ac{MAP} solutions,
we return to the measurement model \eqref{eq:model-stats} and
consider the case where the elements of the noise $\vv{n}$
are \ac{iid} Gaussian with zero mean and variance $\sigma^2$,
and $\vv{f}$ is a zero-mean Gaussian process
with invertible covariance matrix $\mathbb{E}(\vv{f}\vv{f}^*) = \vv{C}$.
In this case, we can transform the \ac{MAP} objective \eqref{eq:MAP}
using the negative log and arrive at
\begin{equation}
  \label{eq:gauss-MAP-objective}
  \argmin_{\vv{f}}\frac{1}{\sigma^2}  \| \vv{g} - \vv{H} \vv{f} \|^2_2 
  +  \| \vv{C}^{-\frac{1}{2}}\vv{f} \|^2_2,
\end{equation}
which has a closed-form solution
\begin{equation}
  \label{eq:gauss-MAP}
  (\vv{H}^* \vv{H} + \sigma^2 \vv{C}^{-1})^{-1} \vv{H}^* \vv{g}.
\end{equation}
One can also derive the corresponding \ac{MMSE} solution
by finding the expectation of \eqref{eq:bayes},
which essentially requires some lengthy algebra
involving the underlying multivariate Gaussian distributions.
The result is
\begin{equation}
  \label{eq:gauss-MMSE}
  \vv{C} \vv{H}^* \left( \vv{H} \vv{C} \vv{H}^* + \sigma^2 \vv{I} \right) ^{-1} \vv{g},
\end{equation}
which is also called the Wiener filter.
It turns out 
that \eqref{eq:gauss-MAP} and \eqref{eq:gauss-MMSE} are equal.
(One way to verify this is to check that both satisfy
the normal equation of \eqref{eq:gauss-MAP-objective},
$\vv{H}^* \vv{g} =  (\vv{H}^* \vv{H} + \sigma^2 \vv{C}^{-1}) \vv{f}$,
which is meaningful because the solution to \eqref{eq:gauss-MAP-objective} is unique.)
And, further, note that \eqref{eq:gauss-MAP-objective} is 
equivalent to the Tikhonov formulation \eqref{eq:problem_tikhonov}
with $\vv{L} = \vv{C}^{-\frac{1}{2}}$ and $\lambda = \sigma^2$;
thus, the optimal choice of $\vv{L}$ is the
whitening operator matched with $\vv{f}$.
So, for the special case of Gaussian denoising of a Gaussian process,
the \ac{MAP} and \ac{MMSE} solutions are the same,
and they can also be obtained from a specific variational formulation.
But, for many other image reconstruction problems,
this equivalence does not hold.
For a more in-depth discussion of the relationship of  \ac{MAP} and \ac{MMSE},
see \cite{gribonval_bayesian_2019}.

\section{Iterative Reconstruction}
\label{sec:iterative-reconstruction}
We have now seen several reconstruction problems that
have a closed-form solution.
Unfortunately, these usually involve inverting the normal matrix, $\vv{H}^* \vv{H}$,
which is large (its size is number of pixels squared).
So, for practical implementations, we turn to iterative algorithms.
Iterative algorithms for solving convex problems of this form are,
by now, a textbook topic~\cite{boyd_convex_2004};
we present one such algorithm here to give a sense of their form.

Given a Tikhonov reconstruction problem,
\begin{equation}
\argmin_\vv{f}  J(\vv{f}) = \argmin_\vv{f}  \| \vv{g} - \vv{H} \vv{f} \|^2_2 + \lambda  \| \vv{L}\vv{f} \|^2_2 .
\end{equation}
The gradient descent algorithm involves iterates of the form
\begin{equation}
  \label{eq:grad-desc}
  \vv{f}^{(k+1)} = \vv{f}^{(k)} - \gamma^{(k)} \nabla J(\vv{f}^{(k)}),
\end{equation}
where the gradient, $\nabla J(\vv{f})$, is given by 
\begin{equation}
    \label{eq:grad-desc-grad}
  \nabla J(\vv{f}) = - 2\vv{H}^*\vv{g} + 2(\vv{H}^* \vv{H} + \lambda\vv{L}^* \vv{L}) \vv{f}.
\end{equation}
Setting $\gamma$ to a small constant in \eqref{eq:grad-desc} gives a workable algorithm
that is also rather flexible.
For instance, convex constraints,
such as the positivity of $\vv{f}$,
can be enforced by projecting onto the feasible region after each iteration,
an algorithm known as projected gradient descent.
Gradient descent is good starting point for building iterative algorithms,
but there are many alternative.
The choice of the best algorithm
will generally depend on the specifics of problem at hand.
For example, the \ac{CG} algorithm solves the same problem
more efficiently, at the cost
of increased memory usage~\cite{shewchuk_introduction_1994}.

\begin{remark}[Convolutional normal matrix]
  \begin{shaded}
    It turns out that the normal matrix, $\vv{H}^* \vv{H}$,
    appears in many iterative reconstruction schemes
    and is often the computational bottleneck.
    The case of gradient descent~\eqref{eq:grad-desc-grad} is typical:
    reconstruction requires a single back projection, $\vv{H}^* \vv{g}$,
    followed by one normal operation $(\vv{H}^* \vv{H} + \lambda \vv{L}^* \vv{L}) \vv{f}$
    per iteration.
    Often, we select $\vv{L}$ to be a convolution (e.g., a high-pass filter),
    which means that  $\vv{L}^* \vv{L}$ is a convolution as well.
    If $\vv{H}^* \vv{H}$ is a convolution, then the entire operation is a convolution.
    This is important because a discrete, linear convolution can be computed using
    the \ac{FFT},
    resulting in a fast algorithm for the normal operation.
    Due to boundary conditions,
    the matrices may be Toeplitz (diagonal-constant) rather
    than circulant,
    but \ac{FFT}-based convolution can still be used for fast algorithms
    provided that proper zero-padding is applied;
    see \cite{oppenheim_discrete_2009} Section 8.7.

    For example, with a specific choice of discretization,
    it is possible to model  parallel-ray X-ray \ac{CT}
    with a convolutional normal matrix~\cite{delaney_fast_1996}.
    This approach has been used in
    cryo-EM~\cite{vonesch_fast_2011} and synchrotron microtomography~\cite{mccann_fast_2016}.
  \end{shaded}
  \end{remark}

\section{Summary}
Classical image reconstruction methods fall into either
direct methods or variational (regularized, iterative) methods.
Direct methods are derived in the continuous domain;
they are fast to apply and give good results when the number of measurements is high.
Variational methods involve minimizing an objective function
that usually consists of a data term and one or more regularization terms.
These methods tend to be more robust to noise than direct methods,
at the cost of increased computation.

\chapter{Sparsity-Based Image Reconstruction}
\label{chap:modern}
Over the past two decades,
numerous new reconstruction methods have emerged,
many of which continue to use the variational framework,
i.e., reconstruction is performed by minimizing a cost functional.
Of these, one dominant paradigm has been that of sparsity---%
the idea that a high dimensional image, $\vv{f}$,
can be represented by only a small number of nonzero coefficients.
The concept of sparsity is included in reconstruction problems
by replacing
the quadratic regularization functional, $\| \vv{L} \vv{f} \|^2_2$,
of Tikhonov regularization
to one that promotes sparsity.
Especially when the number or quality of the measurements is low,
such a change can drastically improve the reconstruction quality.
In this section, we will introduce these sparsity-based image reconstruction
techniques.

\section{Sparsity and Compressive Sensing}
As we saw in the previous section,
using classical reconstruction techniques,
recovering an image of $N$ pixels requires 
on the order of $N$ measurements.
The motivating idea behind compressive sensing
is that many images
can be compressed by applying a suitable linear transform 
and throwing away small values.
One way to model this compressibility
is to say that,
when expressed in the correct basis,
many images have few nonzero coefficients.
Mathematically, one can write that
\begin{equation}
  \| \vv{L} \vv{f} \|_0 = K \ll N.
\end{equation}
It turns out that, if $\vv{f}$ fits this model,
it can be recovered from a number of measurements
much smaller than its number of pixels.
Popular choices for the sparsifying transform include
finite differences, the Fourier transform, the discrete cosine transform, 
and the wavelet transform;
we compare a few of these in Figure~\ref{fig:transforms}.

\begin{figure}[!htbp]
  \centering
  \begin{subfigure}[b]{\linewidth}
    \begin{overpic}[width=.25\linewidth]{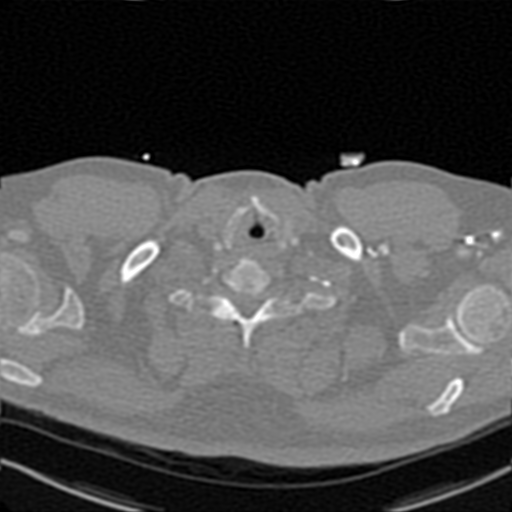}
      \put (2, 88) {\color{white} \footnotesize DFT, 26.9 dB}
    \end{overpic}%
    \begin{overpic}[width=.25\linewidth]{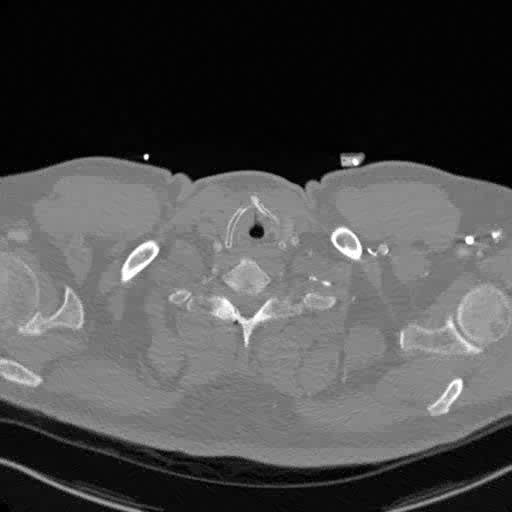}
      \put (2, 88) {\color{white} \footnotesize DCT, 28.8 dB}
    \end{overpic}%
   \begin{overpic}[width=.25\linewidth]{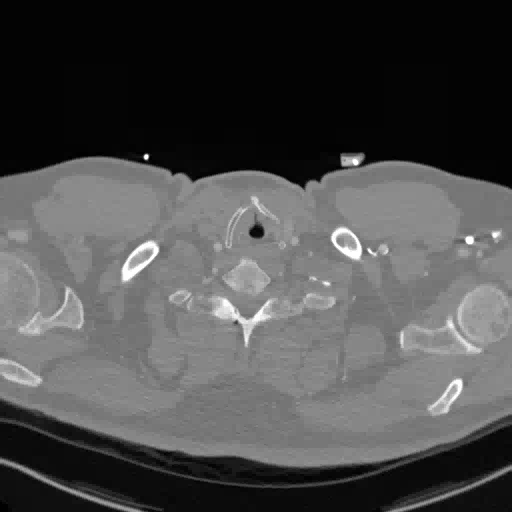}
      \put (2, 88) {\color{white} \footnotesize Haar, 28.8 dB}
    \end{overpic}%
   \begin{overpic}[width=.25\linewidth]{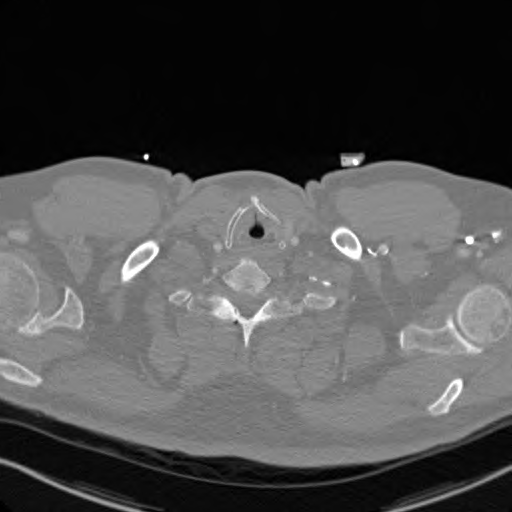}
      \put (2, 88) {\color{white} \footnotesize db4, 30.6 dB}
    \end{overpic}
\caption{Retaining 13,000 largest transform coefficients}
  \end{subfigure}
\par\bigskip %
  \begin{subfigure}[b]{\linewidth}
     \begin{overpic}[width=.25\linewidth]{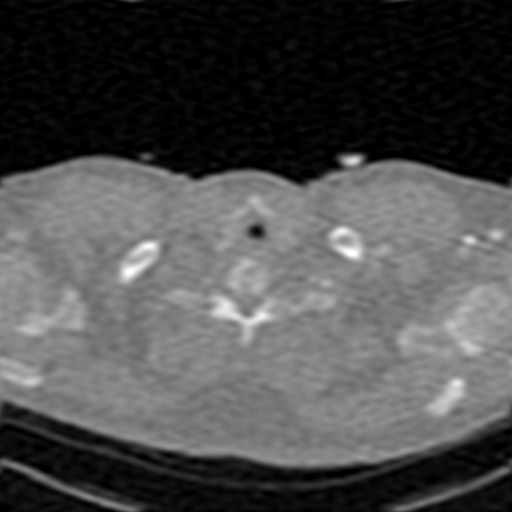}
      \put (2, 88) {\color{white} \footnotesize DFT, 20.6 dB}
    \end{overpic}%
    \begin{overpic}[width=.25\linewidth]{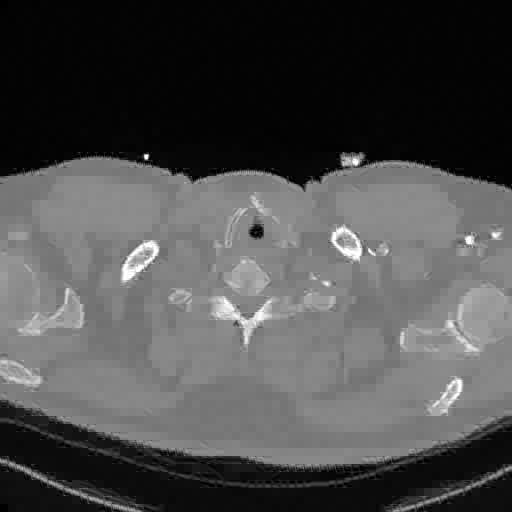}
      \put (2, 88) {\color{white} \footnotesize DCT, 23.8 dB}
    \end{overpic}%
   \begin{overpic}[width=.25\linewidth]{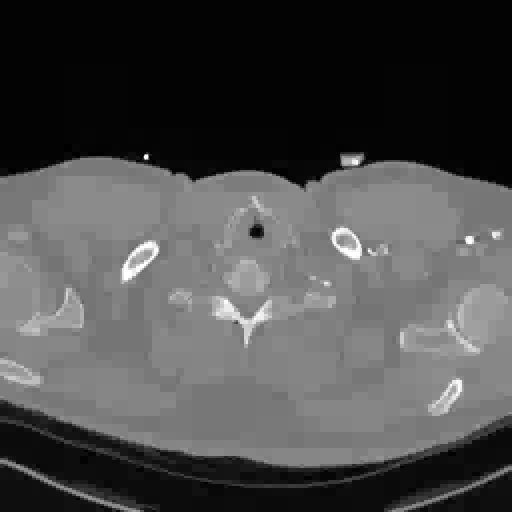}
      \put (2, 88) {\color{white} \footnotesize Haar, 23.1 dB}
    \end{overpic}%
   \begin{overpic}[width=.25\linewidth]{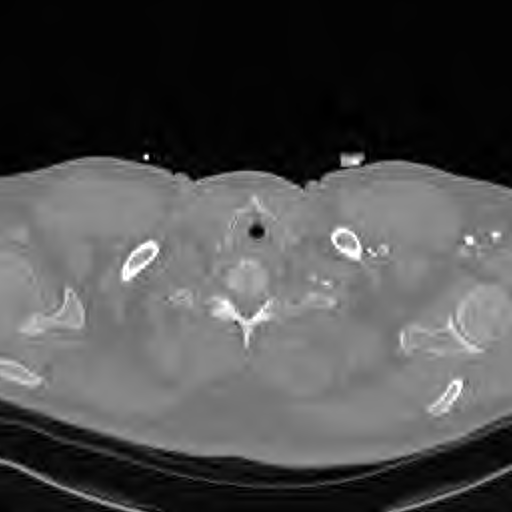}
      \put (2, 88) {\color{white} \footnotesize db4, 24.1 dB}
    \end{overpic}
\caption{Retaining 2,600 largest transform coefficients}
  \end{subfigure}
\par\bigskip %
 \begin{subfigure}[b]{\linewidth}
     \begin{overpic}[width=.25\linewidth]{{Figures/transform_dft_99.5}.png}
      \put (2, 88) {\color{white} \footnotesize DFT, 18.8 dB}
    \end{overpic}%
   \begin{overpic}[width=.25\linewidth]{{Figures/transform_dct_99.5}.png}
      \put (2, 88) {\color{white} \footnotesize DCT, 22.3 dB}
    \end{overpic}%
   \begin{overpic}[width=.25\linewidth]{{Figures/transform_haar_99.5}.png}
      \put (2, 88) {\color{white} \footnotesize Haar, 20.9 dB}
    \end{overpic}%
   \begin{overpic}[width=.25\linewidth]{{Figures/transform_db4_99.5}.png}
      \put (2, 88) {\color{white} \footnotesize db4, 21.2 dB}
    \end{overpic}
\caption{Retaining 1,300 largest transform coefficients}
  \end{subfigure}
  \caption{Example of the compressibility of biomedical images
    using a \ac{CT} image from \acf{TCIA}~\cite{clark_cancer_2013}.
    We transform a 512 $\times$ 512 image using the
    discrete Fourier transform (DFT),
    the 8 $\times$ 8  block discrete cosine transform (DCT),
    the Haar wavelet transform (Haar),
    or the Daubechies 4 wavelet transform (db4).
    Then, we set the smallest 95, 99, and 99.5 percent of the coefficients to zero
    and reconstruct.
    The \ac{SNR} of the reconstruction is reported in decibels.
    At these compression ratios, much of the detail of the original is preserved,
    suggesting the image is compressible.
    On the other hand, the compressed versions are not exact copies of the original,
    and, therefore, this type of compression is usually not used when storing medical images.
  }
  \label{fig:transforms}
\end{figure}

We begin our discussion by considering $\vv{L} = \vv{I}$, 
so that $\|  \vv{f} \|_0 = K$.
Given measurements, $\vv{g} = \vv{H}\vv{f}$,
the compressive sensing problem is
\begin{equation}
  \label{eq:CS}
\argmin_\vv{f}  \|\vv{H}\vv{f} - \vv{g}\|^2_2
\quad \text{s.t.} \quad  \| \vv{f} \|_0 \le K,
\end{equation}
where the $\ell_0$ norm, $\| \vv{f} \|_0$, counts
the number of nonzero entries of $\vv{f}$.
In general, \eqref{eq:CS} is very challenging to solve,
however under certain conditions on $\vv{H}$~\cite{donoho_compressed_2006,candes_stable_2006},
the solution is unique and the problem is equivalent to 
\begin{equation}
  \label{eq:CSl1}
  \argmin_\vv{f}  \|\vv{H}\vv{f} - \vv{g}\|^2_2 + \lambda\| \vv{f} \|_1 ,
\end{equation}
where $\| \vv{f} \|_1 = \sum_n |[ \vv{f} ]_n|$,
for a particular choice of $\lambda$.
(Note the change from the $\ell_0$ to $\ell_1$ norm;
this is what makes the problem tractable.)

A more general formulation of the sparse recovery problem
includes a sparsifying transform, $\vv{L}$,
and is formulated as
\begin{equation}
  \label{eq:CS-L}
\argmin_\vv{f}  \|\vv{H}\vv{f} - \vv{g}\|^2_2
\quad \text{s.t.} \quad  \| \vv{L} \vv{f} \|_0 \le K,
\end{equation}
which, under certain conditions~\cite{candes_robust_2006},
we can equivalently reformulate into an $\ell_1$ problem,
\begin{equation}
\label{eq:l1}
\argmin_\vv{f}  \|\vv{H}\vv{f} - \vv{g}\|^2_2 + \lambda\| \vv{L} \vv{f} \|_1.
\end{equation}
While the formulation \eqref{eq:l1} is quite similar to the
Tikhonov formulation \eqref{eq:problem_tikhonov},
changing to the $\ell_1$ norm has important consequences;
we contrast these two formulations in Figures~\ref{fig:ellipse} and \ref{fig:ellipse_sparse}.

\begin{figure}[!htbp]
  \centering
\begin{tikzpicture}

  \begin{axis}[
    width = 5.5cm,
    xmin=-2, xmax=3.5,
    xlabel = $f_1$,
    ymin=-2, ymax=2,
    ylabel = $f_2$,
    ticks = none,
    axis lines = middle,
    x=1cm, y=1cm,
    axis line style={latex-latex},
     every axis y label/.style={
      at={(ticklabel* cs:1)},
      anchor=east,
    },
    ]

    \fill [name path = ell, gray, opacity=0.1] (1.75, .5) ellipse [
     x radius = 1.6180,
     y radius = 0.6180,
     rotate = 121.7175
     ];
     \fill [gray] (1.75, .5) circle (3pt);

    \draw [color=red, dashed]  (1.3385, 0) -- (0, 1.3385) -- (-1.3385, 0) -- (0, -1.3385) -- cycle;
    \fill [red] (1.3385, 0) circle (3pt);

    \coordinate (B) at (1.0363, 0.5133);
    \node [draw, circle through=(B), color=blue] at (0,0) {};
    \fill [blue] (B) circle (3pt);

  \end{axis}

\end{tikzpicture}  
\caption{Comparison of $\ell_2$ and $\ell_1$ regularization
when the number of linearly independent measurements is not less than the dimension of the reconstruction.
The gray point is the unique unregularized solution and
the gray ellipse is $\{\vv{f} :  \|\vv{H}\vv{f} - \vv{g}\|_2
^2 \le \sigma^2\}$.
The solid blue line is a level set of $\|\vv{f}\|_2$,
and the blue point marks the solution of the $\ell_2$-regularized problem.
Likewise, the dashed red line is a level set of $\|\vv{f}\|_1$
and the red point marks the $\ell_1$ solution.
As expected, the $\ell_1$ solution is sparse (because it has only one non-zero element, $f_1$).
}
\label{fig:ellipse}
\end{figure}
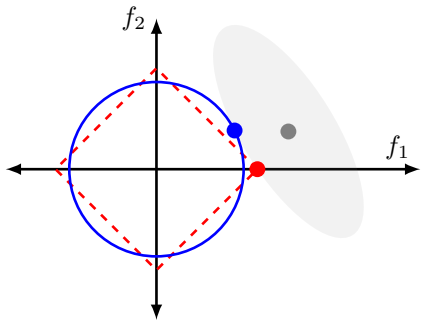

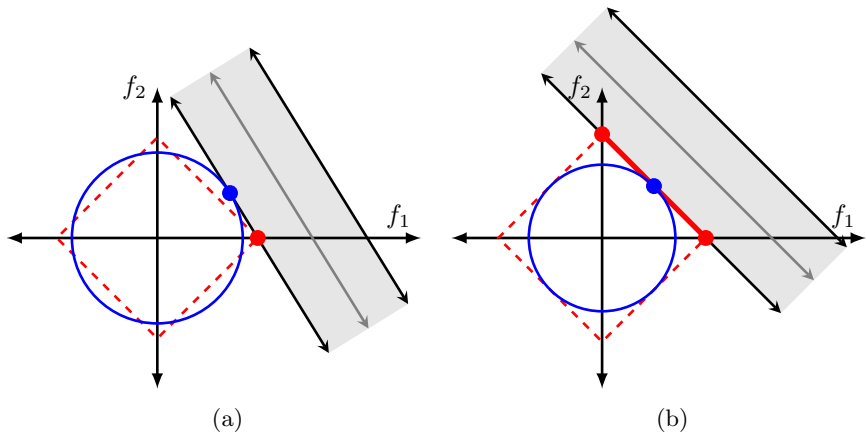
\begin{figure}[htbp]
  \centering
  \begin{subfigure}[b]{.5\linewidth}
    \begin{tikzpicture}
      \begin{axis}[
        clip=false,
        width = 5.5cm,
        xmin=-2, xmax=3.5,
        xlabel = $f_1$,
        ymin=-2, ymax=2,
        ylabel = $f_2$,
        ticks = none,
        axis lines = middle,
        every axis y label/.style={
          at={(ticklabel* cs:1)},
          anchor=east,
        },
                x=1cm, y=1cm,
        axis line style={latex-latex}
        ]

\def\w{.618cm}

\coordinate (center) at (1.75, .5);
\def\rot{121.7175}

\path [rotate=\rot, fill, opacity=0.1]
let \p1 = (center) in
(\x1 - 2cm, \y1 -.618cm) rectangle (\x1 + 2cm, \y1 + .618cm);

\draw [rotate=\rot, <->, name path=ell]
let \p1 = (center) in
(\x1-2cm , \y1 + .618cm) -- (\x1 + 2cm, \y1 + .618cm); 

\draw [rotate=\rot, <-> ]
let \p1 = (center) in
(\x1-2cm , \y1 - .618cm) -- (\x1 + 2cm, \y1 - .618cm); 

\draw [rotate=\rot, <->, gray]
let \p1 = (center) in
(\x1-2cm , \y1) -- (\x1 + 2cm, \y1);

\draw [color=red, dashed]  (1.3325, 0) -- (0, 1.3325) -- (-1.3325, 0) -- (0, -1.3325) -- cycle;
\fill [red] (1.3325, 0) circle (3pt);

\coordinate (B) at (.9642, .5959);
\node [draw, circle through=(B), color=blue] at (0,0) {};
\fill [blue] (B) circle (3pt);
\end{axis}
\end{tikzpicture} 
\caption{}
\end{subfigure}%
\begin{subfigure}[b]{.5\linewidth}
  \begin{tikzpicture}
     \begin{axis}[
        clip=false,
        width = 5.5cm,
        xmin=-2, xmax=3.5,
        xlabel = $f_1$,
        ymin=-2, ymax=2,
        ylabel = $f_2$,
        ticks = none,
        axis lines = middle,
        every axis y label/.style={
          at={(ticklabel* cs:1)},
          anchor=east,
        },
        x=1cm, y=1cm,
        axis line style={latex-latex}
        ]

\def\w{.618cm}

\coordinate (center) at (1.75, .5);
\def\rot{135}

\path [rotate=\rot, fill, opacity=0.1]
let \p1 = (center) in
(\x1 - 1.5cm, \y1 -.618cm) rectangle (\x1 + 3cm, \y1 + .618cm);

\draw [rotate=\rot, <->, name path=ell]
let \p1 = (center) in
(\x1-1.5cm , \y1 + .618cm) -- (\x1 + 3cm, \y1 + .618cm); 

\draw [rotate=\rot, <-> ]
let \p1 = (center) in
(\x1-1.5cm , \y1 - .618cm) -- (\x1 + 3cm, \y1 - .618cm); 

\draw [rotate=\rot, <->, gray]
let \p1 = (center) in
(\x1-1.5cm , \y1) -- (\x1 + 3cm, \y1);

\coordinate (A) at (0, 1.3760);
\coordinate (B) at (1.3760, 0);
\draw [color=red, dashed]  (1.3760, 0) -- (0, 1.3760) -- (-1.3760, 0) -- (0, -1.3760) -- cycle;
\fill [red] (A) circle (3pt);
\fill [red] (B) circle (3pt);
\draw [color=red, line width=2pt] (A) -- (B);

\coordinate (B) at (1.125-.437, 1.125-.437); %
\node [draw, circle through=(B), color=blue] at (0,0) {};
\fill [blue] (B) circle (3pt);
\end{axis}
\end{tikzpicture} 
\caption{}\label{fig:ellipse_sparse:nonunique}
\end{subfigure}  
\caption{(a) Effect of regularization
  when the number of linearly independent measurements is less than the size of the reconstruction.
  In contrast to Figure~\ref{fig:ellipse}, 
  the unregularized solution is nonunique and, specifically,
  is an affine subspace (gray line).
  Otherwise, the situation is the same as in Figure~\ref{fig:ellipse},
  with a nonsparse $\ell_2$ solution and sparse $\ell_1$ solution.
  (b) For certain problems, the $\ell_1$ solution is nonunique
  (red line segment).
  Even in these cases, Theorem~\ref{thm:l1-representer} states that
  the extreme points of the solution set (red points)
  are sparse.
}
\label{fig:ellipse_sparse}
\end{figure}

The formulation \eqref{eq:l1} is called the analysis form of the regularization,
because the matrix $\vv{L}$ retrieves the sparse coefficients from the original signal.
This is in contrast to the synthesis form,
\begin{equation}
  \argmin_{\vv{\alpha} \in \mathbb{R}^N} \| \vv{H} \tilde{\vv{L}} \vv{\alpha} - \vv{g} \|^2_2 + \lambda \| \vv{\alpha} \|_1,
  \label{eq:sparse-synth}
\end{equation}
where the matrix $\tilde{\vv{L}}$ now acts to construct the signal from its sparse coefficients.
If the synthesis transform has a left inverse, this is merely a change in notation;
otherwise, the two forms are meaningfully different.
For more discussion of these two forms, see \cite{elad_analysis_2007}.

\section{Representer Theorems for \texorpdfstring{$\ell_2$}{l2} and \texorpdfstring{$\ell_1$}{l1} Problems}
Another perspective on sparsity-promoting regularization is given by representer theorems
that specify the form of solutions to certain minimization problems.
For example, for Tikhonov regularization with $\vv{L}$ being the identity matrix,
we can state the following representer theorem,
which is a simplified special case of a result from \cite{unser_representer_2016}.
\begin{theorem}[Convex Problem with $\ell_2$ Minimization]
The problem
\begin{equation*}
  \argmin_{\vv{f}} \| \vv{f} \|^2_2 
\quad \text{s.t.} \quad \| \vv{H}\vv{f} - \vv{g} \|^2_2 \le \sigma^2,
\end{equation*}
has a unique solution of the form
\begin{equation*}
  \tilde{\vv{f}} = \vv{H}^* \vv{a},
\end{equation*}
for a suitable set of coefficients, $\vv{a} \in \mathbb{R}^M$.
\label{thrm:l2-representer}
\end{theorem}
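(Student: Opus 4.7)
The plan is to exploit the fundamental orthogonal decomposition $\mathbb{R}^N = \mathrm{Range}(\vv{H}^*) \oplus \mathcal{N}(\vv{H})$, which holds because the range of $\vv{H}^*$ is precisely the orthogonal complement of the null space of $\vv{H}$. Writing any candidate $\vv{f} \in \mathbb{R}^N$ uniquely as $\vv{f} = \vv{f}_R + \vv{f}_N$ with $\vv{f}_R = \vv{H}^*\vv{a}$ for some $\vv{a} \in \mathbb{R}^M$ and $\vv{f}_N \in \mathcal{N}(\vv{H})$, the goal is to show that at any minimizer the null-space component must vanish.

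First I would observe that the constraint $\|\vv{H}\vv{f} - \vv{g}\|_2^2 \le \sigma^2$ depends only on $\vv{f}_R$, since $\vv{H}\vv{f} = \vv{H}\vv{f}_R + \vv{H}\vv{f}_N = \vv{H}\vv{f}_R$. Next, because $\vv{f}_R$ and $\vv{f}_N$ are orthogonal, Pythagoras gives
\begin{equation*}
  \|\vv{f}\|_2^2 = \|\vv{f}_R\|_2^2 + \|\vv{f}_N\|_2^2 \ge \|\vv{f}_R\|_2^2,
\end{equation*}
with equality if and only if $\vv{f}_N = \vv{0}$. Hence replacing any feasible $\vv{f}$ by its projection $\vv{f}_R$ onto $\mathrm{Range}(\vv{H}^*)$ preserves feasibility and strictly decreases the objective unless $\vv{f}_N = \vv{0}$. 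Therefore every minimizer $\tilde{\vv{f}}$ belongs to $\mathrm{Range}(\vv{H}^*)$, which is exactly the claimed form $\tilde{\vv{f}} = \vv{H}^* \vv{a}$.

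For uniqueness, I would invoke strict convexity: the objective $\vv{f} \mapsto \|\vv{f}\|_2^2$ is strictly convex on $\mathbb{R}^N$, and the feasible set $\{\vv{f} : \|\vv{H}\vv{f} - \vv{g}\|_2^2 \le \sigma^2\}$ is a (nonempty, closed) convex sublevel set of a convex quadratic. A strictly convex function attains its minimum at most once on a convex set, so uniqueness follows immediately. Existence is handled by noting that the objective is coercive on the closed feasible set.

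The main obstacle is really a conceptual one rather than a technical one: making sure the orthogonality fact $\mathrm{Range}(\vv{H}^*) = \mathcal{N}(\vv{H})^\perp$ is stated cleanly, since everything else follows from Pythagoras and strict convexity. One subtle point worth flagging is that the theorem presumes feasibility (i.e., $\sigma^2$ large enough that the constraint set is nonempty); otherwise there is no solution to represent. Beyond that, the argument generalizes almost verbatim to the penalized Tikhonov formulation, which gives a satisfying consistency check.
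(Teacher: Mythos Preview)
Your argument is correct and is the standard route: the orthogonal decomposition $\mathbb{R}^N = \mathrm{Range}(\vv{H}^*) \oplus \mathcal{N}(\vv{H})$, Pythagoras to kill the null-space component, and strict convexity of $\|\cdot\|_2^2$ over the convex feasible set for uniqueness. The feasibility caveat you flag is appropriate.

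As for comparison with the paper: there is nothing to compare against. The paper states this theorem without proof, attributing it as a simplified special case of a result in \cite{unser_representer_2016}. Your self-contained argument is therefore strictly more than what the paper provides, and it is exactly the kind of elementary proof one would expect for the finite-dimensional $\ell_2$ case.
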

The useful insight here is that the solution to an $\ell_2$-regularized problem
always has the form of a weighted sum of the original measurement vectors,
i.e., the columns of $\vv{H}$.
Moreover, the number of elements in the sum is equal to the number of measurements, $M$,
so there is no reason to expect $\vv{f}$ to be sparse,
unless these measurement vectors are themselves sparse in some transform domain.
If we choose instead to minimize a function
including an invertible regularization operator, $\| \vv{L} \vv{f} \|^2_2$,
we can state a similar theorem with $\vv{H}^*$ replaced
by $(\vv{L}^* \vv{L})^{-1} \vv{H}^*$.

Contrast Theorem~\ref{thrm:l2-representer} with the $\ell_1$ representer theorem~\cite{unser_representer_2016},
\begin{theorem}[Convex Problem with $\ell_1$ Minimization]
The set
\begin{equation*}
 \mathcal{V} =  \argmin_{\vv{f}} \| \vv{f} \|_1 
\quad \text{s.t.} \quad \| \vv{H}\vv{f} - \vv{g} \|^2_2 \le \sigma^2,
\end{equation*}
is convex, compact, and has extreme points of the form
\begin{equation*}
  \tilde{\vv{f}} = \sum_{k=1}^K [\vv{a}]_k \vv{e}_{[\vv{n}]_k},
\end{equation*}
where $\{\vv{e}_{n}\}_{n=1}^N$ is the standard basis for $\mathbb{R}^N$
(i.e., unit vectors pointing along each of the axes)
and for a suitable set of coefficients $\vv{a} \in \mathbb{R}^K$
and locations $\vv{n} \in \{1, 2, \dots, N \}^K$
with $K \le M$.
\label{thm:l1-representer}
\end{theorem}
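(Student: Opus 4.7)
The plan is to prove three things in sequence: convexity of $\mathcal{V}$, compactness of $\mathcal{V}$, and then the sparse structure of its extreme points. Convexity and compactness are quick; the real work is characterizing the extreme points via a linear independence argument on the active columns of $\vv{H}$.

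First I would handle the easy properties. The feasible set $\mathcal{C} = \{\vv{f} : \|\vv{H}\vv{f}-\vv{g}\|_2^2 \le \sigma^2\}$ is a sublevel set of a convex function, hence convex and closed. The objective $\|\cdot\|_1$ is convex, so its argmin over $\mathcal{C}$ is convex. Let $p^*$ be the optimal value; then $\mathcal{V} = \mathcal{C} \cap \{\vv{f} : \|\vv{f}\|_1 = p^*\}$ is closed, and every element has $\|\vv{f}\|_1 = p^*$, hence $\mathcal{V}$ is bounded and therefore compact. Nonemptiness of $\mathcal{V}$ follows by assuming the problem is feasible and invoking the Weierstrass theorem (continuous coercive function on a closed set). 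By the Minkowski theorem (the finite-dimensional form of Krein--Milman), $\mathcal{V}$ has extreme points and equals their convex hull.

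The crux is to show that any extreme point $\tilde{\vv{f}}$ is supported on at most $M$ coordinates. Let $S = \{n : [\tilde{\vv{f}}]_n \ne 0\}$ with $K = |S|$, and let $\vv{s} \in \mathbb{R}^N$ be the sign vector (equal to $\operatorname{sign}([\tilde{\vv{f}}]_n)$ on $S$, zero elsewhere). I claim the columns $\{\vv{H}_n : n \in S\}$ of $\vv{H}$ are linearly independent, which immediately gives $K \le M$. Suppose for contradiction they are dependent. Then there exists a nonzero $\vv{d} \in \mathbb{R}^N$ with $\operatorname{supp}(\vv{d}) \subseteq S$ and $\vv{H}\vv{d} = \vv{0}$. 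For $|t|$ small enough, $\tilde{\vv{f}} + t\vv{d}$ preserves the sign pattern on $S$ and remains zero off $S$, so
\begin{equation*}
\|\tilde{\vv{f}} + t\vv{d}\|_1 = \|\tilde{\vv{f}}\|_1 + t\,\langle \vv{s}, \vv{d}\rangle,
\qquad
\vv{H}(\tilde{\vv{f}} + t\vv{d}) = \vv{H}\tilde{\vv{f}},
\end{equation*}
so feasibility is automatically preserved. Now split into two cases. If $\langle \vv{s}, \vv{d}\rangle \ne 0$, choosing $t$ with the opposite sign gives a feasible point with strictly smaller $\ell_1$ norm, contradicting optimality of $\tilde{\vv{f}}$. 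If $\langle \vv{s}, \vv{d}\rangle = 0$, then for small $t > 0$ both $\tilde{\vv{f}} \pm t\vv{d}$ lie in $\mathcal{V}$ and their midpoint is $\tilde{\vv{f}}$, contradicting extremality. Either way we reach a contradiction, so the columns indexed by $S$ are independent and $K \le M$. Writing $\tilde{\vv{f}} = \sum_{k=1}^{K}[\vv{a}]_k \vv{e}_{[\vv{n}]_k}$ with $[\vv{n}]_k$ enumerating $S$ and $[\vv{a}]_k = [\tilde{\vv{f}}]_{[\vv{n}]_k}$ gives the claimed form.

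The main obstacle is the case split in the linear independence argument; in particular, one has to notice that the perturbation $\vv{d}$ can be chosen supported on $S$ so that the $\ell_1$ norm is locally \emph{affine} in $t$ (not merely convex), which is what enables the simultaneous contradiction of optimality (one case) and extremality (the other). A secondary subtlety is checking that $\mathcal{V}$ is actually compact rather than merely bounded—this is immediate here because both the constraint and the level set $\{\|\vv{f}\|_1 = p^*\}$ are closed, but it is worth stating explicitly since it is what allows us to invoke Minkowski's theorem to guarantee the existence of extreme points.
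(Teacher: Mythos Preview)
The paper does not prove this theorem; it merely states the result and attributes it to \cite{unser_representer_2016}, giving no argument of its own. So there is nothing in the paper to compare your proposal against.

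That said, your proof is correct. The convexity and compactness parts are routine, and the linear-independence argument for the extreme points---perturbing along a null-space direction $\vv{d}$ supported on $S$, then splitting on whether $\langle \vv{s},\vv{d}\rangle$ vanishes to contradict either optimality or extremality---is the standard route and is cleanly executed. One cosmetic suggestion: instead of writing $\mathcal{V} = \mathcal{C} \cap \{\vv{f} : \|\vv{f}\|_1 = p^*\}$ (where the second set is not convex, so convexity has to be argued separately as you do), you can write $\mathcal{V} = \mathcal{C} \cap \{\vv{f} : \|\vv{f}\|_1 \le p^*\}$; the inequality is automatically an equality on $\mathcal{C}$, and now $\mathcal{V}$ is an intersection of two closed convex sets, giving convexity, closedness, and boundedness in one stroke.
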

Looking at the form of $\tilde{\vv{f}}$,
we see that it is sparse:
it has fewer nonzero terms
than the number of measurements.
The amplitudes of its nonzero terms are 
given by $\vv{a}$
and their locations are given by $\vv{n}$.
One complication is that the solution to the $\ell_1$
problem is not, in general, unique
(though with additional conditions it is).
Thus, the theorem is stated in terms of the extreme points
of the solution set, i.e., those solutions that cannot be expressed
as linear combinations of other solutions.
When $\vv{L}$ is not the identity,
a similar theorem can be stated where
the extreme points of the solution set are
built out of a sparse linear combination of dictionary vectors that depend on $\vv{L}$~\cite{unser_representer_2016}.

\section{Bayesian View}
We can turn to a statistical perspective on sparsity-promoting regularization,
basing the regularization term on a statistical model of the signal.
When we expect the signal model to be accurate,
this approach allows us to design reconstruction algorithms that are optimal
with regard to chosen statistical criteria.
Even when the signal model is only an educated guess,
the statistical formulation can be helpful in designing new regularization terms with good empirical performance.
Here, we follow the approach of~\cite{unser_introduction_2014},
which places both $\ell_1$ and $\ell_2$ regularization along a spectrum of sparse regularizers.

Working for the moment in 1D,
we model the signal as a stochastic process specified by the innovation model
\begin{equation}
  \label{eq:u-equals}
  \vv{u} = \vv{L} \vv{f},
\end{equation}
where the discrete innovation, $\vv{u}$, has \ac{iid} elements
with an infinitely divisible \ac{PDF}, $p_U$,
and $\vv{L}$ is a whitening operator.
The restriction to this family is motivated by the property that
this is the only configuration where \eqref{eq:u-equals} has a
continuous-domain counterpart as a stochastic differential equation,
$w = L \{f\}$, that specifies the family of sparse stochastic processes~\cite{unser_introduction_2014}.
Yet,
the family of infinitely divisible \ac{PDF}s is still rather larger:
it includes the Gaussian distribution,
as well as several sparser (heavy-tailed or with more mass at the origin)
distributions, e.g., the Laplace, compound-Poisson, and Cauchy.

Combining this signal model with a forward operator $\vv{H}$ and
a Gaussian noise model, we can express the posterior
\begin{align}
  p(\vv{f} \mid \vv{g}) 
  &= \frac{p(\vv{g} \mid \vv{f})  p( \vv{f} ) }{ p(\vv{g}) }\\
  &= \frac{p(\vv{g} - \vv{H} \vv{f})  p( \vv{f} ) }{ p(\vv{g}) } \\
  &\propto \exp \left(
    -\frac{\| \vv{g} - \vv{H} \vv{f} \| }{2 \sigma^2}
 \right)
    \prod_{n = 1}^N p_U ( [\vv{L} \vv{f}]_n ).
\end{align}
Taking the negative log and minimizing (i.e., performing \ac{MAP} estimation), we have
\begin{equation}
  \tilde{\vv{f}} = \argmin_\vv{f} 
  \frac{1}{2}  \| \vv{g} - \vv{H} \vv{f} \|^2_2 
  +  \sigma^2 \sum_n \Phi_U ( [\vv{L} \vv{f}]_n ),
\label{eq:MAP-sparse}
\end{equation}
where $\Phi_U(u) = - \log p_U (u) $ is called the potential function.

\begin{table}[!htbp]
  \centering
  \begin{tabular}{@{}ccc@{}}\toprule
    & $p_U(u)$ & $\Phi_U(u)$ \\ \midrule
    Gaussian & $\frac{1}{\sqrt{2 \pi \sigma^2}} e^{\frac{-u^2}{2\sigma^2}}$ &
             $\frac{1}{2\sigma^2}u^2 + C$   \\
Laplace & $\frac{\lambda}{2}e^{-\lambda |u|}$ &
$\lambda | u | + C$ \\
Student & $\frac{1}{B(r, \frac{1}{2})} \left( \frac{1}{u^2 + 1} \right)^{r+\frac{1}{2}}$ & 
$(r + \frac{1}{2}) \log (1+ u^2) + C$\\
\bottomrule
  \end{tabular}
  \caption{Infinitely divisible distributions and their corresponding potential functions.
  $B$ is the beta function.}
\label{tab:infinite-divisible}
\end{table}

We give a few examples of infinite divisible distributions 
and the corresponding potential functions in Table~\ref{tab:infinite-divisible}.
In the first line, note that the Gaussian potential
corresponds to the squared Euclidean norm (with scaling and plus a constant);
thus, $\ell_2$ regularization corresponds to \ac{MAP} estimation
of a stochastic process with a Gaussian innovation
($\vv{u}$ in \eqref{eq:u-equals}).
This is a generalization of the example from Section~\ref{sec:classical-stat}
because, here, our model if $\vv{f}$ involves the operator $\vv{L}$.
In the second line, we see that the Laplace potential corresponds
to the $\ell_1$ norm,
which means that $\ell_1$ regularization corresponds to $\ac{MAP}$
estimation of a stochastic process with a Laplace-distributed innovation.
This links with the concept of sparsity because the Laplace distribution
is sparser than the Gaussian.
The Student distribution (third line of Table~\ref{tab:infinite-divisible})
is the sparsest of the three,
and, again, its potential function is closely related to sparsity-promoting
regularization.
Specifically, it is an upper bound on the $\log$ prior,
\begin{equation}
  \sum_{n=1}^N \log | [\vv{f}]_n |,
\end{equation}
which can be used as a relaxation of the $\ell_0$ norm
in certain compressive sensing problems
without changing the minima~\cite{wipf_iterative_2010}.

\section{Algorithms}
\label{sec:sparse-algo}
A wide range of algorithms have been developed
for solving problem like \eqref{eq:l1}.
One key concept for understanding them is that of the proximal operator.
\begin{definition}
The \emph{proximal operator}, $\prox_\Phi: \mathbb{R}^N \to \mathbb{R}^N$,  of a convex function, $\Phi:  \mathbb{R}^N \to \mathbb{R}$, is defined by
\begin{equation}
  \prox_\Phi (\vv{u}) = \argmin_{\vv{f}} \frac{1}{2} \| \vv{u} - \vv{f} \|^2_2 + \Phi(\vv{f}).
\end{equation}
  Intuitively, it finds an $\vv{f}$ with a low $\Phi(\vv{f})$,
while being close (proximal) to $\vv{u}$.
\end{definition}
If $\Phi$ acts pixel by pixel,
then the optimization problem that the $\prox$ solves
can be decomposed into $N$ scalar problems, one for each pixel.
And, in many cases, these scalar problems can be solved efficiently
and in parallel.

The $\prox$ is used in a class of algorithms called
forward-backward splitting~\cite{combettes_signal_2005}
or \ac{ISTA}~\cite{beck_fast_2009}
that solve problems of the form
\begin{equation}
  \argmin_{\vv{f}} J_1(\vv{f}) + J_2(\vv{f}),
\end{equation}
with $J_1$ differentiable and $J_2$ having a fast $\prox$.
For example, these approaches can solve
the sparse reconstruction problem \eqref{eq:l1} when $\vv{L}$ is the identity.
They achieve this by alternating a step down the gradient of $J_1$
with an application of the $\prox$ of $J_2$,
which can be shown to converge.
In fact, this method is closely related to projected gradient descent
(discussed in Section~\ref{sec:iterative-reconstruction})
because when a constraint is expressed as a regularization functional
(which takes the value infinity whenever the constraint is violated)
the associated $\prox$ is just projection onto the constraint region.

When the operator $\vv{L}$ is involved,
the sparse reconstruction problem is typically much more
difficult to solve.
A useful paradigm is to split the data term and regularization terms
by introducing an auxiliary variable.
Using the sparse \ac{MAP} formulation \eqref{eq:MAP-sparse} as an example,
splitting results in 
\begin{equation}
  \tilde{\vv{f}} = \argmin_{\vv{f}, \vv{u}}
  \frac{1}{2}  \| \vv{g} - \vv{H} \vv{f} \|^2_2 
  +  \sigma^2 \sum_n \Phi_U ( [\vv{u}]_n )
\quad \text{s.t.} \quad \vv{u} = \vv{L} \vv{f}.
\end{equation}
This problem is equivalent to the original one,
but written in the form required by the \ac{ADMM}~\cite{boyd_distributed_2011}.
The \ac{ADMM} then specifies an iterative algorithm,
\begin{align}
 \label{eq:ADMM-1}
  \vv{f}^{(k+1)} &= \argmin_{\vv{f}}
  \frac{1}{2}  \| \vv{g} - \vv{H} \vv{f}^{(k)} \|^2_2  +
  (\rho / 2) \|\vv{L} \vv{f} - \vv{u}^{(k)} \|^2_2 \\
\label{eq:ADMM-2}
  \vv{u}^{(k+1)} &= \argmin_{\vv{u}}  \sigma^2 \sum_n \Phi_U ( [\vv{u}]_n ) +
(\rho / 2) \|\vv{L} \vv{f}^{(k+1)} - \vv{u} \|^2_2 \\
  \label{eq:ADMM-3}
  \vv{\alpha}^{(k+1)} &= \vv{\alpha}^{(k)} + \vv{L} \vv{f}^{(k+1)} - \vv{u}^{(k+1)}.
\end{align}

It turns out that these three problems are significantly simpler to solve
than the original.
The update \eqref{eq:ADMM-1} is a quadratic problem
that can be solved using the techniques from classical image reconstruction
(Section~\ref{sec:iterative-reconstruction}), 
e.g., the \ac{CG} algorithm.
For the update \eqref{eq:ADMM-2} we can use the $\prox$
of $\Phi_U$, which is known in closed-form for many potential functions.
Finally, the update \eqref{eq:ADMM-3} only requires application of $\vv{L}$.

\section{Summary}
Modern reconstruction techniques focus on sparsity-promoting regularization,
with the goal of reconstructing images from very few measurements.
This is possible only when the image to be recovered can be sparsely represented
in some transform domain.
See Figure~\ref{fig:comparison_l2_l1} for a comparison between
classical variational reconstructions and modern sparsity-based ones.
The algorithms to solve sparse reconstruction problems often involve
splitting the objective function;
each iteration of these algorithms requires solving a classical reconstruction problem,
which makes them computationally heavy.

\begin{figure}[!htbp]
  \centering
   \begin{subfigure}{1.0\linewidth}
  \begin{subfigure}{.328\linewidth}
    \includegraphics[width=\linewidth]{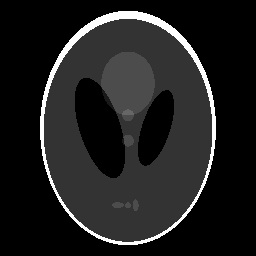}
    \caption{$\vv{f}$}
   \end{subfigure}%
   \begin{subfigure}{.328\linewidth}
     \includegraphics[width=\linewidth]{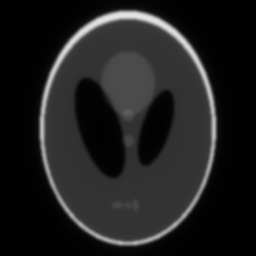}
     \caption{$\vv{C}\vv{f}$}
   \end{subfigure}%
     \begin{subfigure}{.328\linewidth}
       \includegraphics[width=\linewidth]{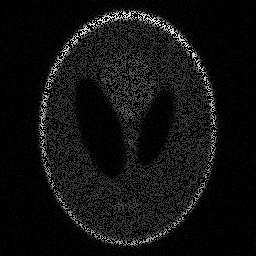}
       \caption{$\vv{M} \vv{C} \vv{f} + \vv{n}$}
     \end{subfigure}
     \end{subfigure}\\
  \begin{subfigure}{1.0\linewidth}
   \begin{overpic}[width=.1666\linewidth, trim={0 0 128px 0}, clip]{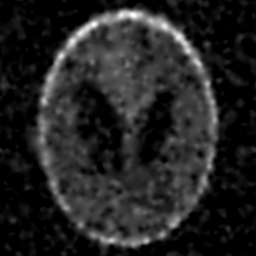}
      \put (2, 90) {\color{white} \footnotesize 5.1 dB}
    \end{overpic}%
     \begin{overpic}[width=.1666\linewidth, trim={128px 0 0 0}, clip]{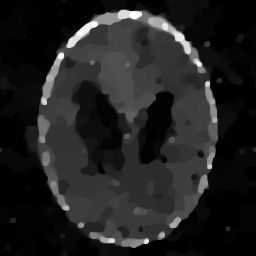}
      \put (20, 2) {\color{white} \footnotesize 5.9 dB}
    \end{overpic}%
      \begin{overpic}[width=.1666\linewidth, trim={0 0 128px 0}, clip]{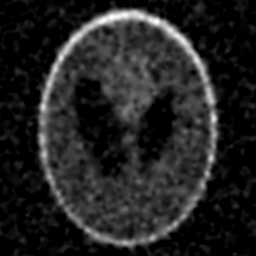}
      \put (2, 90) {\color{white} \footnotesize 5.9 dB}
    \end{overpic}%
     \begin{overpic}[width=.1666\linewidth, trim={128px 0 0 0}, clip]{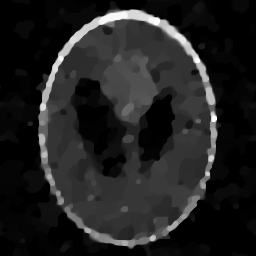}
      \put (20, 2) {\color{white} \footnotesize 7.2 dB}
    \end{overpic}%
      \begin{overpic}[width=.1666\linewidth, trim={0 0 128px 0}, clip]{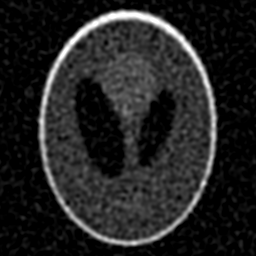}
      \put (2, 90) {\color{white} \footnotesize 7.9 dB}
    \end{overpic}%
     \begin{overpic}[width=.1666\linewidth, trim={128px 0 0 0}, clip]{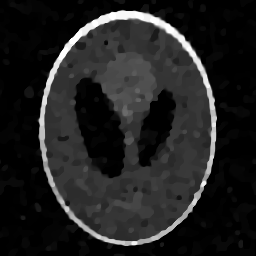}
      \put (20, 2) {\color{white} \footnotesize 10.0 dB}
    \end{overpic}%
    \caption{5 dB SNR measurements, retaining 5, 10, and 50\% of pixels.
      Left side of each image: $\ell_2$ regularization.
    Right side of each image: $\ell_1$ regularization.}
  \end{subfigure}\\
  \begin{subfigure}{1.0\linewidth}
   \begin{overpic}[width=.1666\linewidth, trim={0 0 128px 0}, clip]{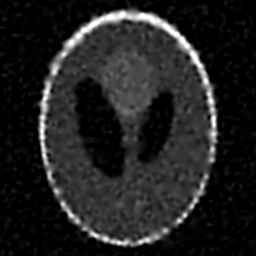}
      \put (2, 90) {\color{white} \footnotesize 7.5 dB}
    \end{overpic}%
     \begin{overpic}[width=.1666\linewidth, trim={128px 0 0 0}, clip]{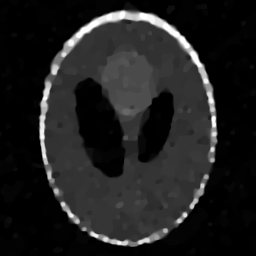}
      \put (20, 2) {\color{white} \footnotesize 8.5 dB}
    \end{overpic}%
      \begin{overpic}[width=.1666\linewidth, trim={0 0 128px 0}, clip]{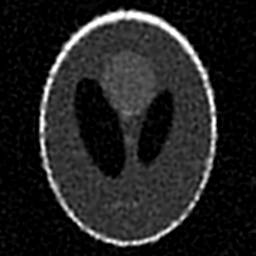}
      \put (2, 90) {\color{white} \footnotesize 8.8 dB}
    \end{overpic}%
     \begin{overpic}[width=.1666\linewidth, trim={128px 0 0 0}, clip]{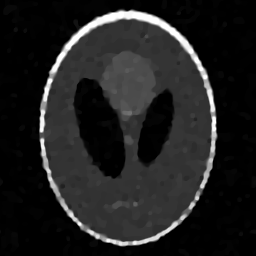}
      \put (20, 2) {\color{white} \footnotesize 10.5 dB}
    \end{overpic}%
      \begin{overpic}[width=.1666\linewidth, trim={0 0 128px 0}, clip]{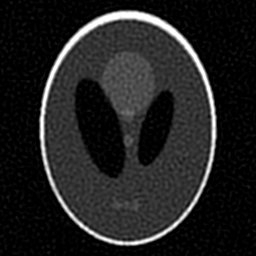}
      \put (2, 90) {\color{white} \footnotesize 11.0 dB}
    \end{overpic}%
     \begin{overpic}[width=.1666\linewidth, trim={128px 0 0 0}, clip]{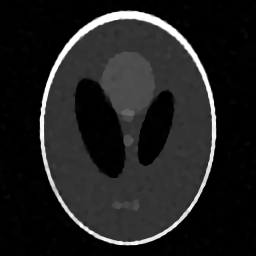}
      \put (20, 2) {\color{white} \footnotesize 14.5 dB}
    \end{overpic}%
    \caption{20 dB SNR measurements, retaining 5, 10, and 50\% of pixels.
     Left side of each image: $\ell_2$ regularization.
    Right side of each image: $\ell_1$ regularization.}
  \end{subfigure}
  \caption{Comparison of classical regularization
    and sparsity-promoting regularization.
    A discrete test image~\cite{shepp_maximum_1982} (a) is degraded with blur (b),
    subsampling, and noise (c).
    For two different levels of measurement noise (rows (d) and (e)),
    we reconstruct with $\ell_2$ and $\ell_1$ regularization (
    left and right sides of each panel of (d) and (e), each respectively),
    with regularization parameter $\lambda$ chosen to maximize \ac{SNR}.
    The \ac{SNR} of the reconstruction is reported in decibels for reconstruction.
    Qualitatively, classical regularization fills gaps by smoothing,
    while sparsity-promoting regularization gives a piecewise constant solution.
    For this image, the latter is superior.
  }
  \label{fig:comparison_l2_l1}
\end{figure}

\subsection{Further Reading}
Our discussion of regularization is a simplified version of a much more complex story;
see \cite{engl_regularization_1996} for a textbook treatment and
\cite{benning_modern_2018} for more historical details and mathematical depth.
See \cite{candes_sparse_2007} for an introduction to compressive sensing,
and \cite{rani_systematic_2018} for an overview of the fields where it is applied.
For examples of the statistical view presented here applied to real problems,
see~\cite{unser_introduction_2014}.

\chapter{The Learning (R)Evolution}
\label{chap:learning}
The field of machine learning,
which deals with creating computer programs that can improve with training~\cite{mitchell_machine_1997},
has existed for decades
and has even been used for solving image reconstruction problems
as early as in the 1980s~\cite{zhou_image_1988}.
Beginning with the emergence of one particular machine learning algorithm,
the \ac{CNN},
as a powerful and practical tool for object recognition in 2012~\cite{russakovsky_imagenet_2015},
there has been a surge in interest in applying learning-based methods
to a wide variety of problems in image processing and computer vision,
including image reconstruction.
Our goal in this section is to give a broad overview of these approaches.

So far, we have considered a setting where we have measurements, $\vv{g}$,
and an understanding of the measurement device
from which we build a forward model, $\vv{H}$.
Additionally, we may have some knowledge of the image we want to reconstruct,
allowing us to design a variational regularization term.
In the learning formulation, we assume that we also have access to
training data:
$T$ pairs of measurements and
their corresponding ground truth reconstructions, $\{\vv{g}_t, \vv{f}_t \}_{t=1}^T$.
The inclusion of training data changes the task of the engineer from
designing a reconstruction procedure that maps measurements to images,
$\mathcal{R} : \vv{g} \mapsto \vv{f}$,
to designing a learning procedure
that maps training sets to reconstruction procedures,
$\mathcal{L} : \{\vv{g}_t, \vv{f}_t \}_{t=1}^T \mapsto \mathcal{R}$.

We have organized this section along a spectrum of how much is learned.
On one end of the spectrum are methods where the training data
is used to improve a part of a variational method,
usually the forward model or the regularization term.
On the other end are  \emph{pipeline} or \emph{end-to-end} methods,
where most or all of the reconstruction procedure is a generic regression function
with its parameters learned during training.
We also discuss a few methods that do not fit into our learning formulation
as well as the problem of how to acquire training data.

\section{Learning the Forward Model}
Instead of relying on a physical model of the imaging setup alone
to design the forward model, $\vv{H}$,
we can use training data to estimate the forward model.
This approach is advantageous because the accuracy of the forward model
has a large impact on the quality of the reconstruction.
Learning the forward model is a daunting task for large problems
because the number of elements of the matrix is
the number of image pixels times the number of measurements%
---roughly the number of image pixels squared,
easily terabytes of data.
Often, knowledge of the structure of the forward model
can greatly reduce the number of measurements needed to estimate it.

One example~\cite{shaw_point_1991} comes from deconvolution of microscopy images.
Using our knowledge of the underlying physics,
we can assume that the system matrix is a convolution---%
a Toeplitz (diagonal-constant) matrix.
If we further assume that the impulse response
(known in microscopy as the \ac{PSF})
is sufficiently limited in space,
then the system matrix is completely specified by its middle rows.
By imaging small beads, we can create
a training set with images of the form
$\vv{f}_t =
\begin{bmatrix}
  0 & \dots & 0 & 1 & 0 & \dots & 0 
\end{bmatrix}^T$.
The resulting measurements, $\{\vv{g}^t\}$, each
provide an independent estimate of the \ac{PSF}
that can be averaged to reduce the effect of noise.

Another example~\cite{panin_fully_2006}, comes from \ac{PET}
where measurements of a carefully positioned point source
were used to estimate the rows of the forward model.
While estimating the entire forward model would require
approximately four million measurements,
the authors were able to use the geometrical symmetries of the system
plus a physical model to interpolate the forward model from
just 1,599 measurements.

A final example comes from \ac{MRI},
where coil sensitivity maps are a key piece of the forward model
that must be estimated from data.
The standard approach is to collect data from a body coil,
which has uniform sensitivity,
and use it to estimate the sensitivity maps.
The downside of this approach is that the \ac{SNR} of the body coil data is low,
making the sensitivity estimation noisy.
The problem remains an area of active research~\cite{fessler_model_2010}.

\section{Learning the Regularization Term}
There are a variety of ways to use training data to improve
the regularization term.
One straightforward approach is to use training to
optimally adjust the regularization strength,
i.e., tune the parameter $\lambda$.
In the approach of \cite{kunisch_bilevel_2013},
the training takes the form of a bilevel optimization,
\begin{equation}
  \argmin_\lambda \sum_{t=1}^T \| \mathcal{R}_\lambda(\vv{g}_t) - \vv{f}_t\|^2_2,
\end{equation}
with
\begin{equation}
  \mathcal{R}_\lambda(\vv{g}) = \argmin_f \| \vv{g} - \vv{H} \vv{f} \|^2_2
  + \lambda \| \vv{L} \vv{f} \|_p^p,
\end{equation}
where $p$ is one or two.
The solution of this problem is the value of $\lambda$ 
that gives optimal performance on the training set.
Reconstructions of new images should then use this optimal $\lambda$.
The same method can be extended multiple $\vv{L}$ operators,
each with their own weight;
e.g., \cite{kunisch_bilevel_2013} uses the set of $5 \times 5$ \ac{DCT} filters.

Going a step further,
we can try to learn the form of the potential function,
i.e., the  $\Phi$ in a regularization term $\Phi( \vv{L} \vv{f} )$.
Recalling the \ac{ADMM} algorithm from Section~\ref{sec:sparse-algo},
we know that solving a regularized inverse problem
involves applying the proximal operator associated with the potential function.
We can implicitly specify the potential function by learning the proximal operator,
e.g., by parameterizing it using 1D B-splines~\cite{kamilov_learning_2016,nguyen_learning_2018}.
We can see $\prox$ learning as a generalization of tuning the regularization weight:
instead of scaling a known function,
we deform a parametric function.
If the learned functions satisfy mild constraints,
the resulting algorithm is guaranteed to converge.

Or, as a different approach,
we can learn the filters in the regularization term
leading to a learning procedure~\cite{yaghoobi_constrained_2013},
\begin{equation}
  \argmin_{\vv{L} \in \mathcal{C}} \sum_{t=1}^T \| \vv{L} \vv{f}_t\|_1,
\end{equation}
where the matrix $\vv{L}$ must be constrained to lie in a set $\mathcal{C}$
to avoid the trivial solution $\vv{L} = \vv{0}$.
A variation of this approach is to allow for noisy training data,
leading to
\begin{equation}
  \argmin_{\vv{L} \in \mathcal{C}, \{\vv{h}\}_t} \sum_{t=1}^T \| \vv{L} \vv{h}_t\|_1
  \quad \text{s.t.} \quad \| \vv{h}_t - \vv{f}_t \| \le \sigma.
\end{equation}
Recalling the connection between the $\ell_1$ norm and sparsity,
we can view this learning procedure as seeking a sparsifying transform
for the images in the training set.
Intuitively, this learned transform should allow for sparser representations
(and therefore better regularization)
than a fixed one.

We can also learn a sparsifying transform in the synthesis formation.
Doing this is known as sparse dictionary learning (or sparse coding)~\cite{elad_image_2006}.
To learn the $\vv{L}$ in a synthesis formulation \eqref{eq:sparse-synth}, we solve
\begin{equation}
  \argmin_{\vv{L}, \{\vv{\alpha}\}_t} \sum_{t=1}^T \| \vv{L} \vv{\alpha}_t - \vv{f}_t\|^2_2 + \| \vv{\alpha} \|_p^p,
  \label{eq:DL}
\end{equation}
where $p$ is zero or one.
Although for simplicity we have written the data fidelity term at the whole-image level,
it is usually evaluated at the patch level,
so that $\vv{L}$ is a dictionary that can sparsely represent, e.g., $15 \times 15$ images patches.
These patches must then be combined, usually by averaging, to create an image.
Convolutional sparse coding, which more-closely matches the formulation \eqref{eq:DL}
has also been explored~\cite{bristow_fast_2013},
as has a patch-based method with additional image-level regularization
to smooth patch boundaries~\cite{soltani_tomographic_2017}.
See \cite{xu_low_2012} for an example of this approach used in X-ray \ac{CT}.

Finally,
we can learn the entire regularization term.
Again, this can be done using a bilevel optimization~\cite{chen_insights_2014} %
\begin{equation}
  \argmin_{\vv{\theta}} \sum_{t=1}^T \| \mathcal{R}_{\vv{\theta}}(\vv{g}_t) - \vv{f}_t\|^2_2,
\end{equation}
with
\begin{equation}
  \label{eq:learned-reg}
  \mathcal{R}_\theta(\vv{g}) = \argmin_f \| \vv{g} - \vv{H} \vv{f} \|^2_2
  + \phi_{\vv{\theta}}( \vv{L}_{\vv{\theta}} \vv{f}),
\end{equation}
where $\vv{\theta}$ is a vector of parameters that controls both
the penalty function $\phi$
and the analysis filters, $\vv{L}$.
Regularizers of other forms can also be learned,
e.g., \cite{dave_solving_2019} use a pixel-wise autoregressive model
as a regularizer.

As we saw in Section~\ref{sec:sparse-algo},
certain algorithms for solving regularized reconstruction problems
like \eqref{eq:learned-reg} make use of the proximal operator of the regularization term.
One very flexible method for creating new regularizers is to replace
the proximal operator in these algorithms with, e.g., a \ac{CNN};
this implicitly defines a new regularization term.
The approach is sometimes called plug-and-play regularization.
For examples of work along these lines, see \cite{romano_little_2017,aggarwal_modl_2019,sun_online_2019,dong_denoising_2019}.
A further generalization of this idea is consensus equilibrium~\cite{buzzard_plug_2018},
which seeks reconstructions that balance data fidelity and regularization
without necessarily minimizing an optimization functional.

Another approach~\cite{chang_one_2017,gupta_cnn_2018,raj_gan_2019},
which is closer in spirit to a constrained minimization problem,
is to learn a image-to-image regression function, $\mathcal{R}_{\vv{\theta}}$,
typically a \ac{CNN},
that maps each $\vv{f}$ to the nearest member
of the set of plausible images, $\mathcal{S}$.
This function allows us to solve
\begin{equation}
  \argmin_{\vv{f} \in \mathcal{S}} \| \vv{H} \vv{f} - \vv{g} \|,
\end{equation}
which is conceptually attractive because it
returns the image that best fits the measurements among all plausible images.
The challenge here is that \emph{plausible} is defined by the training data,
which may not be sufficiently representative to capture the notion.
Yet another approach~\cite{wu_iterative_2017} is to combine the ideas of sparsity and projection
by a learning parametric image (or patch) encoder and decoder pair
according to
\begin{equation}
  \argmin_{\vv{\theta}} \sum_{t=1}^T \| \vv{f}_t - \mathcal{D}_{\vv{\theta}}( \mathcal{E}_{\vv{\theta}}( \vv{f}_t) ) \|^2_2
  \quad \text{s.t.} \quad
  \| \mathcal{E}_{\vv{\theta}}( \vv{f}_t) \|_0 \le K.
\end{equation}
Reconstruction can then be achieved via
\begin{equation}
  \argmin_{\vv{f}}  \| \vv{H}\vv{f} - \vv{g} \|^2_2 + \lambda \| \vv{f} -  \mathcal{D}_{\vv{\theta}}( \mathcal{E}_{\vv{\theta}}( \vv{f}) )  \|^2_2
  \quad \text{s.t.} \quad
  \| \mathcal{E}_{\vv{\theta}}( \vv{f}) \|_0 \le K.
\end{equation}

\section{Going Outside the Variational Framework}
One alternative to the variational formulation
that has been intensely explored in the last few years
is to perform a linear reconstruction
followed by a learned image-to-image regressor,
typically a \ac{CNN}.
The linear reconstructions often result in heavy artifacts
when the number of measurements is low
(streaks in \ac{CT} or aliasing in \ac{MRI});
the job of the regressor is to remove these artifacts;
therefore the method is sometimes referred to as
learned denoising or artifact removal.
Learning takes the form
\begin{equation}
  \argmin_{\vv{\theta}} \sum_{t = 1}^{T}
  \| \mathcal{R}_{\vv{\theta}} ( \vv{H}^\dagger \vv{g}_t ) - \vv{f}_t \|^2_2,
\end{equation}
where $\vv{H}^\dagger$ is a matrix that is an approximate inverse
of the forward model, e.g., back projection, $\vv{H}^*$,
or filtered back projection, $\vv{W} \vv{H}^*$ or $\vv{H}^* \vv{W} $,
for some appropriate matrix $\vv{W}$.
The same idea can be used with mixtures of linear (or nonlinear) reconstructions,
with the idea that the regressor will learn to spatially vary the regularization strength
based on the image content~\cite{boublil_spatially_2015}.
The main advantage of these methods,
as compared to fully end-to-end methods,
is that the design of the regressor is simplified:
because it always acts on images (rather than vectors of measurements),
it can can employ well-studied images processing tools
such as multiscale processing and convolutions.
This approach has been implemented
with a variety of network architectures
for X-ray \ac{CT}~\cite{jin_deep_2017,chen_low_2017,kang_deep_2018,liu_low_2018}
and \ac{MRI}~\cite{hyun_deep_2018} reconstruction.

\begin{example}[FBPConvNet]
  As an example of some of the practical considerations that go into
  designing and training a learning-based method,
  we present details of one such method,
  FBPConvNet~\cite{jin_deep_2017}, which performs X-ray \ac{CT} reconstruction.
  The FBPConvNet reconstruction method consists of \ac{FBP} followed by a \ac{CNN}.
  The \ac{FBP} part is a standard algorithm which can be performed with MATLAB's
  \texttt{iradon} function.
  On the other hand, special software~\cite{vedaldi_matconvnet_2015} is used to define the \ac{CNN}'s structure
  and train it efficiently.
  The structure of the \ac{CNN} comes from \cite{ronneberger_u_2015},
  with modifications to improve the training stability and performance.
  Designing the architecture of the \ac{CNN} is,
  at this point, an art and requires some trial and error.

  In a simulation experiment,
  the training set consisted of 500 pairs of low-dose measurements
  and the corresponding ground truth reconstruction.
  These data were augmented by flipping each ground truth image in the horizontal and vertical
  directions, resulting in 2,000 training pairs.
  The training itself is a challenging optimization problem
  that is solved using \ac{SGD}---%
  a version of gradient descent that approximates the gradient with a few training samples at a time.
  The \ac{SGD} algorithm requires 
  tuning of several parameters for good convergence.
  Among these are
  the learning rate, which is the step size in gradient descent;
  the learning rate decay, which is a process for the learning rate to decrease after some iterations to improve convergence;
  and 
  the batch size, which is the number of training points to use when computing a stochastic gradient.
  In the case of FBPConvNet, 101 iterations through the training data were sufficient for good performance
  and could be completed in about 15 hours using a \ac{GPU}.
  After training, the algorithm was run on 25 unseen images
  to assess its performance.

  We show the results of this experiment in Figure~\ref{fig:FBPConvNet},
  with a comparison to \ac{FBP} and a \ac{TV}-based approach
  (with regularization parameters chosen to maximize performance on the testing set).
  The results are typical in that the learning-based method
  outperforms the variational one, both quantitative and qualitatively,
  giving slightly sharper-looking reconstructions.
\end{example}

\begin{figure}[!htbp]
  \centering
  \begin{subfigure}{.25\linewidth}
    \includegraphics[width=1\linewidth]{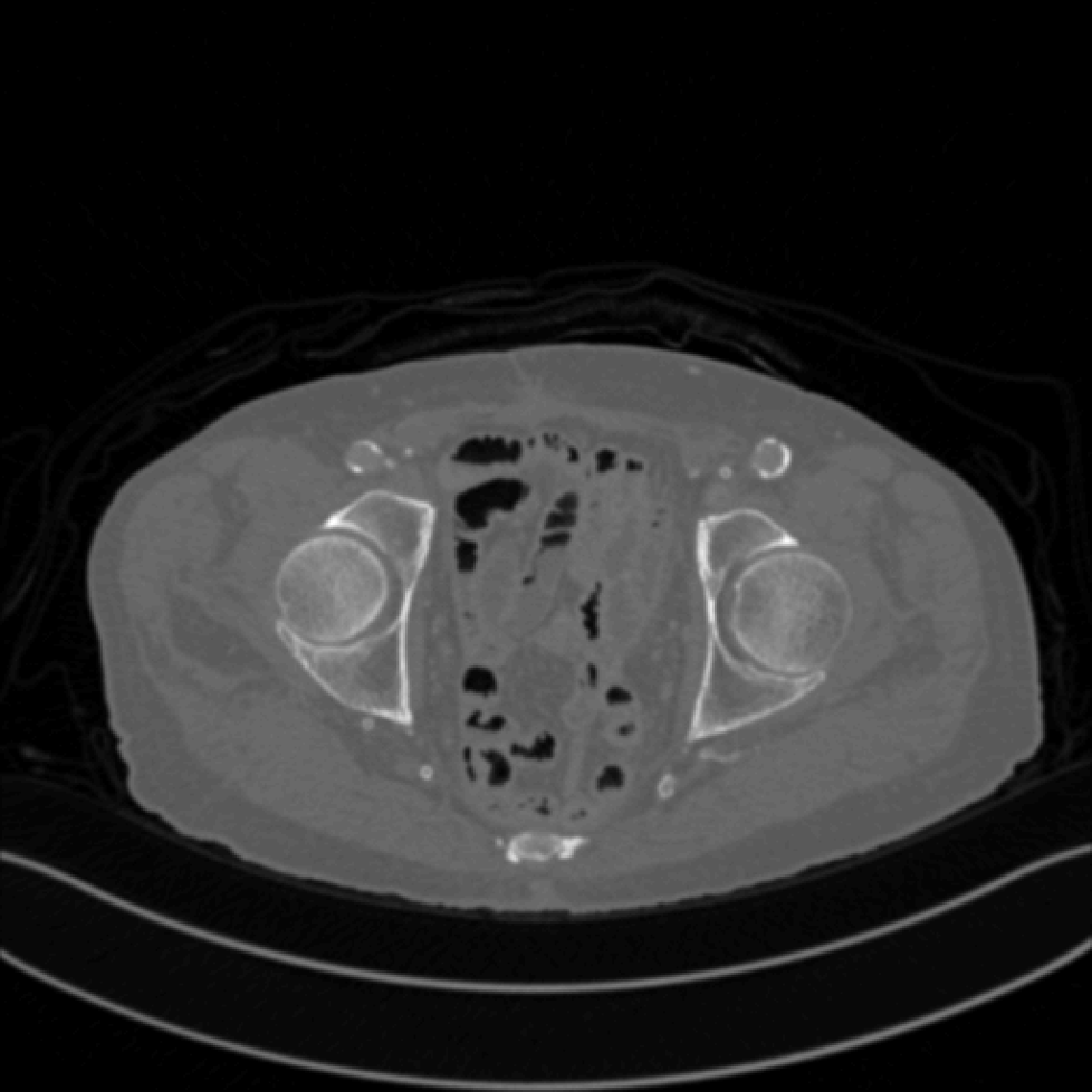}
    \caption{ground truth}
  \end{subfigure}%
  \begin{subfigure}{.25\linewidth}
     \begin{overpic}[width=1\linewidth]{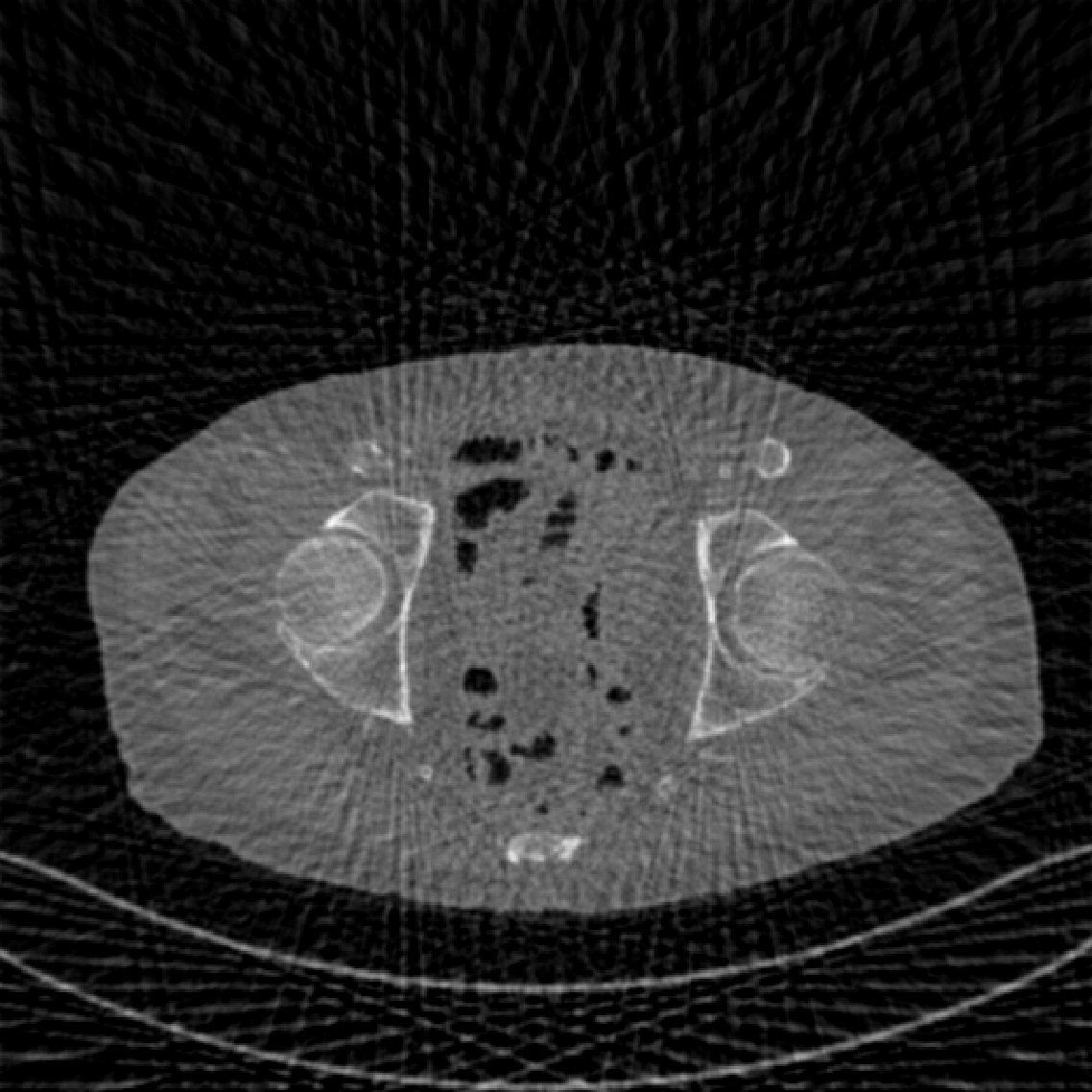}
      \put (2, 90) {\color{white} \footnotesize 13.4 dB}
    \end{overpic}%
    \caption{\ac{FBP}}
  \end{subfigure}%
  \begin{subfigure}{.25\linewidth}
      \begin{overpic}[width=1\linewidth]{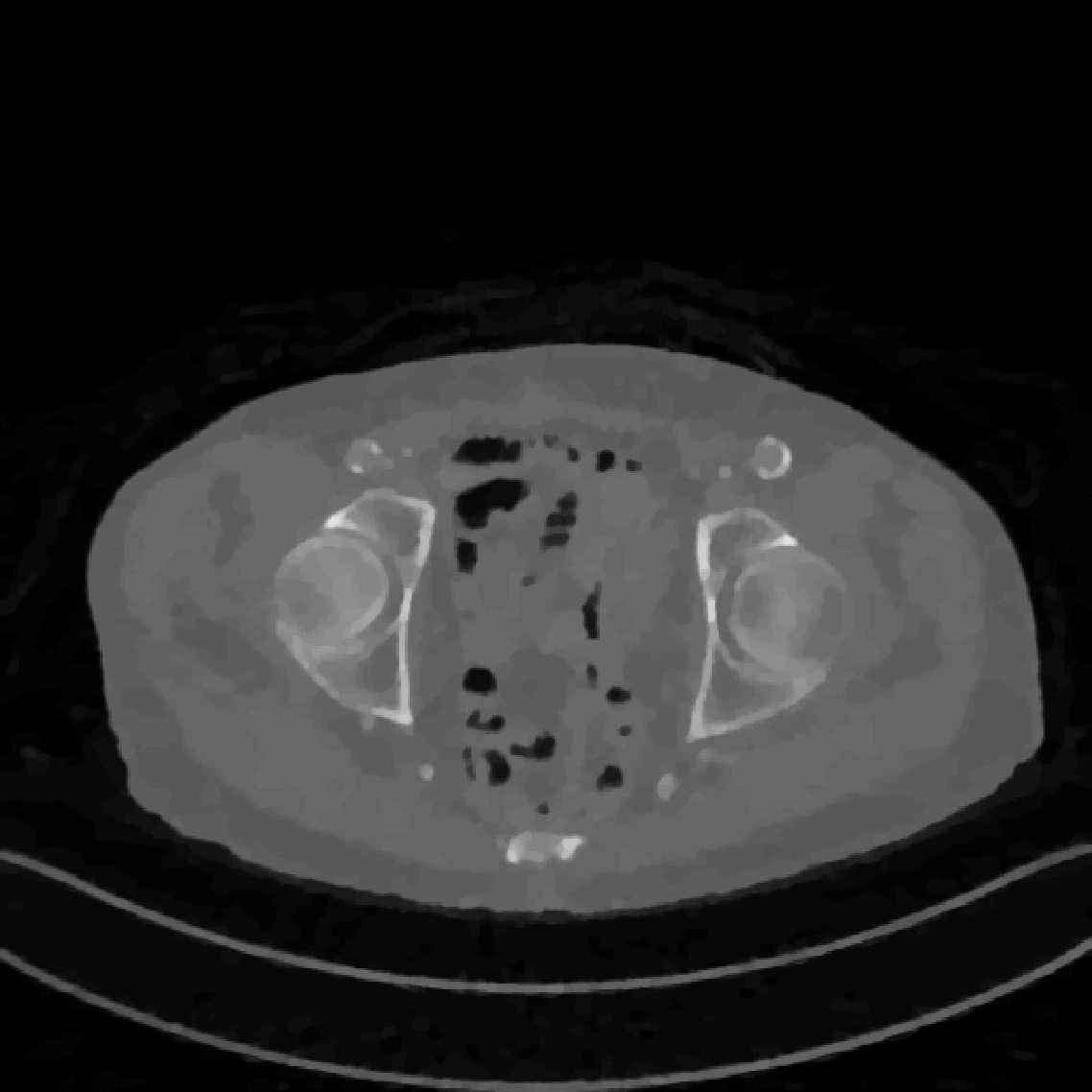}
      \put (2, 90) {\color{white} \footnotesize 24.9 dB}
    \end{overpic}%
    \caption{\ac{TV}}
  \end{subfigure}%
  \begin{subfigure}{.25\linewidth}
     \begin{overpic}[width=1\linewidth]{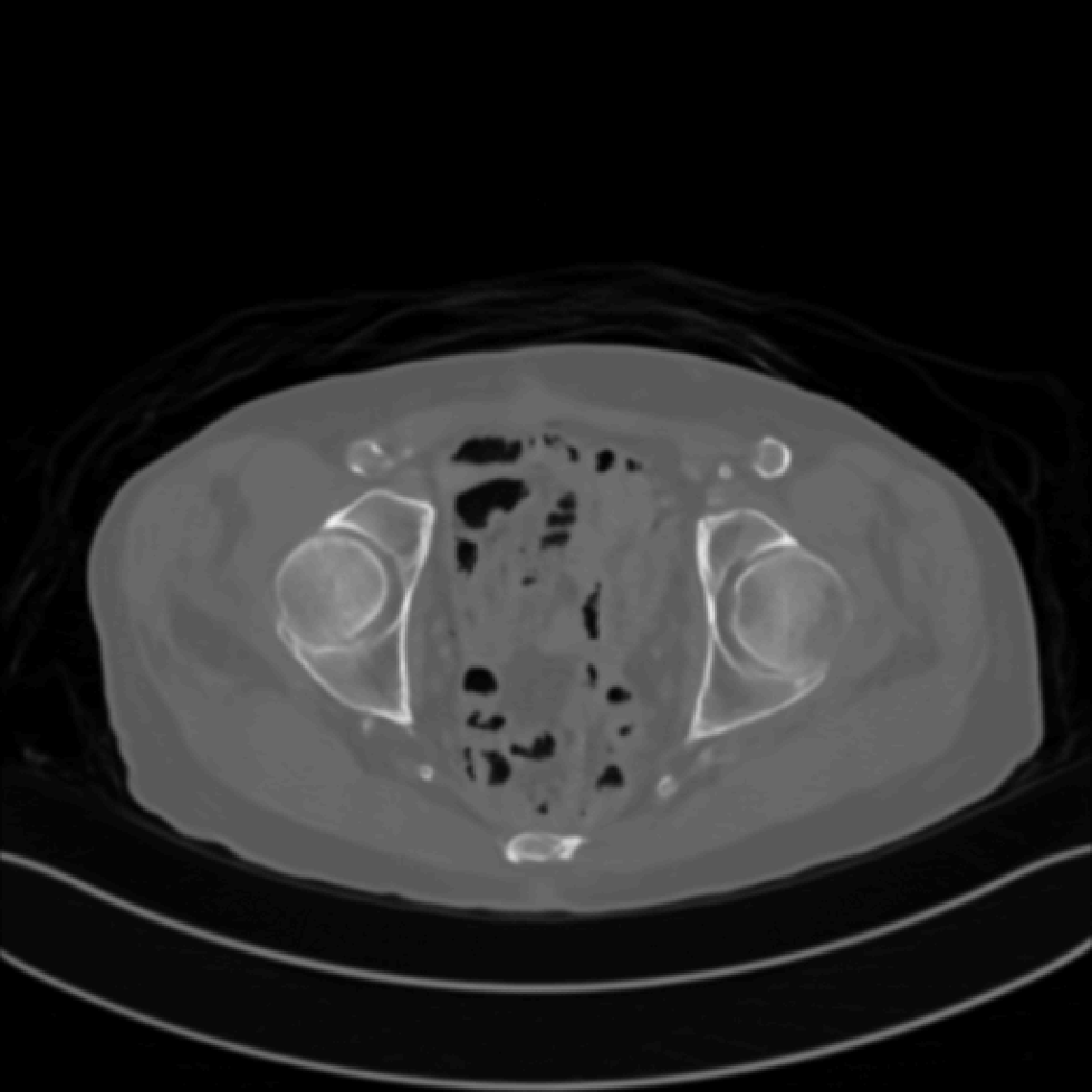}
      \put (2, 90) {\color{white} \footnotesize 28.5 dB}
    \end{overpic}%
    \caption{\ac{FBP} + \ac{CNN}}
   \end{subfigure}
   \caption{Results of a comparison of methods for lose-dose \ac{CT} reconstruction,
     with reconstruction \ac{SNR} given in decibels.
     As is typical of direct methods, \ac{FBP} (b)
     results in artifacts due to the low number of measured views.
     Variational reconstruction with \ac{TV} (c)
     removes many of these artifacts, but at the expense of some oversmoothing.
     Learning-based reconstruction (d) provides sharper-looking reconstructions
     that are quantitatively more accurate than the other methods.
  }
  \label{fig:FBPConvNet}
\end{figure}

A common extension to the learned denoising approach is to also learn a data fidelity term.
This can be done via perceptual loss~\cite{johnson_perceptual_2016},
where images are used as inputs to a pretrained \ac{CNN}
and the distance between them is computed as the
Euclidean distance between their intermediate feature vectors;
the intuition being that a network trained,
e.g., for object recognition,
should building perceptually meaningful intermediate representations
of the images it operates on.
Alternatively, adversarial loss can be employed.
In this case, an image-to-scalar regressor, called a discriminator, is trained
to give low values to real (in the training set) images
and high values to fake (reconstructed images).
The discriminator is trained alongside the reconstruction algorithm,
pushing it to create ever more plausible reconstructions.
See \cite{yang_dagan_2018,quan_compressed_2018,yang_low_2018} for examples
of these approaches for biomedical image reconstruction,
as well as
\cite{zhao_loss_2017} for a comparison of different loss functions
for image restoration.

Finally, we can consider learning the entire reconstruction pipeline end-to-end,
\begin{equation}
   \argmin_{\vv{\theta}} \sum_{t = 1}^{T}
  \| \mathcal{R}_{\vv{\theta}} ( \vv{g}_t ) - \vv{f}_t \|^2_2.
\end{equation}
Here, all of the work goes into designing the structure of $\mathcal{R}_{\vv{\theta}}$
and finding sufficient training data.
One approach is to be problem agnostic,
e.g., using a fully-connected neural network~\cite{durairaj_neural_2007},
or a combination of a fully-connected neural network and a \ac{CNN}~\cite{zhu_image_2018}.
Another is to include some fixed linear layers that are
problem-dependent~\cite{jin_deep_2017,li_learning_2019}.

An alternative design strategy, called unrolling,
is to build the network structure from a preexisting reconstruction algorithm
by turning its iterations into layers in the network.
In effect, the network is a traditional algorithm for solving an inverse problem,
except that some or all of the  linear operations are learned
(rather than being specified by $\vv{H}$)
and the number of iterations/layers is usually low, less than ten,
to prevent the number of parameters in the network from becoming too large.
For example, \cite{jin_deep_2017} notes that
when $\vv{H}^* \vv{H}$ and $\vv{L}$ are both convolutions,
the typical iterate solutions to the $\ell_1$-regularized variational problem
have the form of a \ac{CNN}.
This equivalence allows some learned denoising algorithms
to be interpreted as unrolled iterative methods.
Reference~\cite{kobler_variational_2017} provides another perspective,
proposing to also learn the nonlinear activation functions inside the neural network
as a way to learn the regularization functionals,
and \cite{he_optimizing_2018} explores the idea of parameterizing
the \ac{ADMM} algorithm.
An intermediate approach is to allow $\vv{H}$ to remain fixed,
but to learn, e.g., the gradient step in unrolled gradient descent~\cite{adler_solving_2017a}
or the proximal operators in an unrolled  primal-dual algorithm~\cite{adler_learned_2018}.
Reference \cite{monakhova_learned_2019} explores several unrolled architectures
that vary in terms of how many of the parameters are learned,
concluding that the best performance comes from a balance approach
(i.e., learning some, but not all, of the possible parameters).

The main advantage of learning most or all of the reconstruction pipeline
is that the results can be excellent,
outperforming a well-tuned $\ell_1$ result both quantitatively and qualitatively.
Depending on their specific architecture,
the algorithms can be orders of magnitude faster than iterative methods.
The drawback is that training the algorithms is laborious:
training can take days and is not guaranteed to converge,
meaning that it must be repeated to find suitable parameters
(including training parameters as well as the network architecture itself).
These problems are compounded by the fact that the network may need to be trained again
if aspects of the imaging scenario, e.g., noise level or body part,
change~\cite{knoll_assessment_2018}.
Another drawback is that as more of the reconstruction pipeline is learned,
the data term seems to disappear more and more,
meaning that the returned solution may not be consistent with the measurements.
On one hand, this mismatch may be advantageous:
if the forward model is inaccurate,
enforcing data consistency may make solutions worse (farther from the ground truth).
On the other hand, reconstructions that are plausible-looking
without explaining the measurements are hardly useful images.
One way of enforcing consistency is to mix the learned reconstruction
with a conventional one~\cite{schlemper_deep_2018} %
\begin{equation}
  \argmin_{\vv{f}}\| \vv{g} - \vv{H} \vv{f}\|^2_2 + \lambda \| \vv{f} - \mathcal{R}(\vv{g}) \|^2_2.
\end{equation}

\section{Other Designs}
There are many more designs that do not fit well into the
previous sections.
We share a few here to give a sense of the diversity of the field.

In many problems, we might be able to access high-quality measurements, $\vv{g}_t^{\text{HQ}}$
at the cost of a longer or more-costly acquisition,
e.g., by taking longer X-ray exposures or sampling more of the $k$-space in \ac{MRI}.
The goal is to use these measurements to learn to reconstruct from low-quality (low dose, fast) measurements.
This situation is different from the standard formulation because
now the training measurements are the same as the measurements we aim to reconstruct from.
One solution is simply to generate a training pair $(\vv{f}_t, \vv{g}_t)$
by performing a reconstruction (usually linear) on $\vv{g}_t^{\text{HQ}}$ and by downsampling  $\vv{g}_t^{\text{HQ}}$.
But, $\vv{g}_t^{\text{HQ}}$ can also be used explicitly to learn how to inpaint missing measurements~\cite{eo_kiki_2018}
\begin{equation}
  \argmin_{\vv{\theta}} \sum_{t=1}^{T} \|\mathcal{R}_{\vv{\theta}} ( \mathcal{D}( \vv{g}_t^{\text{HQ}} ) )-  \vv{g}_t^{\text{HQ}} \|^2_2,
\end{equation}
where $\mathcal{D}$ is a model of measurement degradation (e.g., downsampling)
and $\mathcal{R}_{\theta}$ is a parametric inpainting function.
Then, to reconstruct from low-quality measurements, we apply $\mathcal{R}_{\vv{\theta}}$
to generate high-quality measurements and reconstruct using a conventional method.
This is sometimes called data domain learning
and has been used for metal artifact removal in X-ray \ac{CT}~\cite{ghani_fast_2019}.
Another, similar, variation on the setting is learning
to regress from a reconstruction from one measurement type to a reconstruction from another,
for example, to infer one \ac{MRI} scan type from another~\cite{cai_single_2018,kim_improving_2018}.

Another closely-linked topic that does not exactly fit
the paradigm is blind or semiblind image reconstruction.
These involve forward model learning without training data.
Although the problem is superficially impossible---%
recover $\vv{f}$ from $\vv{g} = \vv{H} \vv{f} + \vv{n}$
with unknown $\vv{H}$---%
it can be solved provided enough constraints on $\vv{H}$
and prior knowledge about $\vv{f}$.
The archetypal example is blind deconvolution~\cite{kundur_blind_1996},
where, as with \ac{PSF} estimation from training,
the fact that $\vv{H}$ is a convolution is key.
Many learning algorithms can be adapted to work without training in a similar way,
by alternating a reconstruction step with a learning step that uses the reconstructed images themselves as training;
the challenge is that such a procedure can easily diverge and give nonsensical results.

\section{Where to Get the Training Data}
A key ingredient in any learning method is the training data.
But, how can we acquire training data for inverse problems
without already having a working reconstruction algorithm to make images from measurements?
If we have such a system, why do we need to design another one?
The easiest route out of this chicken-and-egg problem,
as we already mentioned in the previous section,
is to work in the low-quality (few, noisy measurements, sometimes called compressed-sensing) regime,
and to exploit the fact that
we can access high-quality measurements (and reconstructions) at the cost of longer scans.
We can then simulate low-quality measurements by degrading the high-quality ones.
Thus we train a low-quality reconstruction system from a high-quality one.

But, this is not the only way.
In \cite{rivenson_deep_2017},
the authors acquire training data by using
both a low and high numerical aperture microscope,
thus removing the need to simulate low-quality measurements.
The cost of this more-realistic training data is that the
images from both microscopes needs to be correctly registered before learning.
Training data can also come from a highly-accurate physical simulation~\cite{cai_single_2018}
that may be impossible to use directly as a forward model
because of its computational cost or nonlinearity.

Another way is to build algorithms that do not require a training set,
or, at least, not a paired training set
(i.e., measurements and their corresponding reconstructions).
One way to do this is by enforcing cycle consistency~\cite{kang_cycle_2018},
where a pair of algorithms are built, one for reconstruction and one for simulating data.
These algorithms are trained together from unpaired data such that they are inverses.
This approach is similar to generating training from simulation,
except that, here, the simulator itself is learned from data and,
therefore, may be more realistic.

\section{Summary}
Learning-based reconstruction involves using a training set
to tune a parametric reconstruction algorithm.
There is a huge variety in the specific architecture of these algorithms,
with some being closely related to direct or variational methods
and others not.
In any case, creating a suitable training set is a key challenge.

\subsection{Further Reading}
The recent special issue \cite{wang_image_2018}
focuses on original research
on learning-based tomographic reconstruction.
There are several recent reviews around the topic of
learning for image reconstruction:
\cite{mccann_convolutional_2017} focuses on biomedical imaging,
\cite{lucas_using_2018} on general image processing,
\cite{wen_transform_2020} on \ac{MRI} reconstruction,
and \cite{hammernika_machine_2019} on computer vision and medical imaging.
For a long-term perspective on learning in imaging,
see \cite{wang_perspective_2016},
and for a look at how these learning approaches might affect
the practice of radiology, see \cite{zaharchuk_deep_2018}.
For another clinical perspective, \cite{shan_competitive_2019} provides
a double-blind study in which radiologists 
compared learning-based methods to commercial iterative reconstruction methods;
the learning-based methods were shown to be at least as good as the commercial ones.

\chapter{Conclusion}
\label{chap:conclusion}
In this tutorial,
we have given a roughly chronological overview
of biomedical image reconstruction.
We began with a toolbox of mathematical operators
that can be used to build models of many physical imaging systems.
We then showed how to use these forward models
to solve reconstruction problems,
either via direct inversion or variational formulations,
with a focus on the paradigm of sparsity-promoting regularization.
We ended by covering some of the many ways
that training data can be used to develop
new reconstruction methods via machine learning.
Here, we provide a brief comparison of these methods,
commenting on the strength of their theoretical underpinnings,
their complexity, speed, performance, and robustness.
We conclude with comments on future directions for the field.

\section{Comparisons}
Our theoretical understanding is strongest for classical reconstruction methods
and weakest for recent learning-based methods.
The theory behind both classical reconstruction algorithms
and many sparsity-based ones is, for the most part, settled.
Thus, we can often make definitive statements about these algorithms,
e.g., we know that certain iterative procedures converge to global optimums.
We  can sometimes even state that a method is optimal for
a certain class of problems.
That said, there is plenty of work to be done,
especially in pushing our understanding
to ever more complex and realistic data models.
There has been significant progress
in explaining the strong performance
of learning-based methods
on many imaging processing tasks,
including work from the perspective of
approximation theory~\cite{hornik_approximation_1991},
unrolling~\cite{gregor_learning_2010},
and invariants~\cite{mallat_understanding_2016}.
Despite these efforts,
it seems that theory will continue to lag behind practice
for years to come.
We should accept the possibility that
learning approaches may never be understood
as fully as we might like,
and, further, that understanding may become increasingly secondary
to performance.
The fact that the best-understood methods
are not the top performers is not
a new state of affairs in imaging,
but it is understandably disturbing to see the gap widen;
see \cite{elad_deep_2017} for a clear expression of the sentiment.

The trend in complexity among the various reconstruction methods is the same:
classical methods are simple to implement and tune,
while learning-based methods are notoriously hard to train properly.
For classical methods, the number of parameters is small
and can sometimes be calibrated, e.g., by estimating the measurement noise.
Superficially, learning can remove the problem
of parameter selection by fixing parameters during training.
However, training requires its own set of parameters:
the initialization,
gradient step size,
the network architecture, etc.
Because the training problem is usually nonlinear and
nonconvex, all of these parameters affect the result.
However, in a practical settings, such as industrial or medical imaging,
it can be worth investing the engineering effort
to train the system.

In terms of speed,
conventionally we think of trading speed for reconstruction quality:
direct reconstruction is fast but gives the worst results,
increasingly sophisticated regularization improves these results
at the cost of increased runtime.
Learning-based methods upend this order.
Most of the runtime is spent during training,
which happens only once,
after which reconstructions are much quicker.
In fact, this happens because only
network architectures that allow very fast evaluations
can be trained at all.
It is this mode of thinking---%
fix the computational budget, then optimize the algorithm within it---%
that underlies the unrolling concept~\cite{gregor_learning_2010},
and it is a valuable one.

In terms of reconstruction performance,
direct methods are excellent when the number of measurements
is high and the noise is low;
regularization and learning only make sense in the low-dose (high-noise) regime.
(In the full-dose regime,
it is not clear how end-to-end learning can be used at all
because the training data itself would need to come from some reconstruction algorithm.)
In the low-dose regime, however,
regularization or learning is key,
and learning-based methods show the best performance on almost every reconstruction benchmark.
One perspective is that (when sufficient training is available)
learning-based methods provide an upper bound on performance,
which variational methods might achieve if their regularization
and data terms are perfectly suited to the task at hand.
One example of this is in \cite{jin_deep_2017},
where the variational approach beats the learned one
when the data fits the regularization model
(\ac{TV}-based reconstruction of piecewise constant images).

In terms of robustness, learning-based algorithms are currently the weakest.
As we discussed in the last section, slight changes in the imaging parameters
(noise level, body part, etc.) can severely reduce the performance of the network,
or, worse, lead to the appearance of realistic-looking structures in the image
that are not supported by the measurements at all.
Also,
in the context of classification, 
there is the concept of adversarial examples~\cite{szegedy_intriguing_2013},
which are images that are designed to fool a neural network.
While there is debate about why adversarial examples can be created
and what their existence means,
they might indicate that a learning-based image reconstruction algorithm
will occasionally (and inexplicably), give a spurious result.
Recent work has shown examples of this effect in image reconstruction~\cite{antun_instabilities_2020}.
For direct and variational methods, on the other hand,
the reconstruction quality is consistent across datasets
and
(at least in the case of direct and convex variational methods)
degrades gracefully as the measurements degrade.
In particular, the degradation has a predictable form:
noise or specific artifacts (e.g., aliasing in \ac{MRI}, streaks in \ac{CT}).

\section{Future Directions}
The future of the field remains
to produce the highest-quality image using a fixed imaging budget,
e.g., scanning time or radiation dose.
Doing this will continue to require realistic physical models,
though these models may play a new role in reconstruction.
For example, they might be used to create training data,
or they themselves might be learned from training data.
The focus on signal modeling will remain,
and should be pursued both with sophisticated mathematical modeling
and with data-driven approaches.
Finally, we will always seek practical algorithms
with few hand-tuned parameters.

Specifically in the context of biomedical imaging,
we advocate for a shift towards task-oriented evaluations.
The images that we reconstruct in the biomedical domain
are usually used for a specific, predefined purpose,
e.g., to grade a tumor or to measure the quantify the effect
of a genetic manipulation on a model organism.
In this setting, the traditional measures of reconstruction performance
(\ac{SNR}, \ac{MSE}, \ac{SSIM})
are insufficient and should be replaced with task-oriented evaluations.
Such measures may be expensive (if they require human experts)
and they may reveal difficult truths
(e.g., that even a few dB improvement in \ac{SNR} is meaningless),
but they will be critical in creating algorithms
that truly advance the state of biomedical imaging in practice.

Ultimately, future reconstruction algorithms should employ educated learning:
a careful fusion between model-based approaches 
and data-driven ones.
We believe that this approach will provide both excellent performance
and the robustness and performance guarantees needed for biomedical applications.
In biology, this will enable high-quality imaging of a wider range
of biological structures and processes;
in medicine, this will allow diagnostically useful images to be created
more quickly, cheaply, and with less harm to the patient.

\chapter*{Acknowledgements}
\addcontentsline{toc}{chapter}{Acknowledgements}%
This work was made possible thanks to
the CIBM Center for Biomedical Imaging, founded and supported by
Vaud University Hospital Centre (CHUV),
University of Lausanne (UNIL),
Swiss Federal Institute of Technology Lausanne (EPFL),
University of Geneva (UNIGE),
University Hospitals of Geneva (HUG) and the Leenaards and the Louis-Jeantet Foundations
and thanks to the European Research Council through the European Union's Horizon 2020 Research and Innovation Program under Grant 692726, GlobalBioIm.
The authors thank
Philippe Thévenaz,
Julien Fageot,
Avrajit Ghosh,
and
Pol del Aguila Pla.

\backmatter  %

\bibliography{refs_arXiv}
\bibliographystyle{ieeetr}

\end{document}